\renewcommand{\paragraph}{\@startsection{paragraph}{6}{\z@}{2ex}{-0.7em}{\normalsize\bf}}
\newcommand{\paul}[1]{{\color{black}#1}}
\definecolor{mygreen}{RGB}{0.0,180,0.0}
\colorlet{colj}{gray}
\definecolor{coll}{RGB}{31,0,149}
\definecolor{colk}{RGB}{0,152,83}
\begin{document}
\title{High-Level Message Sequence Charts: Satisfiability and Realizability Revisited}
\titlerunning{HMSCs: Satisfiability and Realizability Revisited}

\author{Benedikt Bollig\inst{1}\orcidID{0000-0003-0985-6115} \and
Marie Fortin\inst{2}\orcidID{0000-0001-5278-0430} \and
Paul Gastin\inst{1}\orcidID{0000-0002-1313-7722}}
\authorrunning{B. Bollig, M. Fortin, and P. Gastin}

\institute{Université Paris-Saclay, CNRS, ENS Paris-Saclay, LMF, Gif-sur-Yvette, France\\
\email{\{bollig, gastin\}@lmf.cnrs.fr}\\
\and
Universit{\'e} Paris Cit{\'e}, CNRS, IRIF, Paris, France\\
\email{marie.fortin@irif.fr}\\
}
\maketitle
\begin{abstract}
Message sequence charts (MSCs) visually represent interactions in distributed systems that communicate through FIFO channels. High-level MSCs (HMSCs) extend MSCs with choice, concatenation, and iteration, allowing for the specification of complex behaviors. This paper revisits two classical problems for HMSCs: satisfiability and realizability. Satisfiability (also known as reachability or nonemptiness) asks whether there exists a path in the HMSC that gives rise to a valid behavior. Realizability concerns translating HMSCs into communicating finite-state machines to ensure correct system implementations.

While most positive results assume bounded channels, we introduce a class of HMSCs that allows for unbounded channels
while maintaining effective implementations. On the other hand, we show that the corresponding satisfiability problem is
still undecidable.

\keywords{High-level message sequence charts, Communicating finite-state machines, Satisfiability, Realizability, Synthesis.}
\end{abstract}

% ------------------------------------------------------------
% ------------------------------------------------------------
% ------------------------------------------------------------

\newcommand{\Procs}{\mathscr{P}}
\newcommand{\Msg}{\mathit{Msg}}
\newcommand{\Act}{\mathit{Act}}
\newcommand{\Send}{\mathit{Send}}
\newcommand{\Rec}{\mathit{Rec}}
\newcommand{\Loc}{\mathit{Loc}}
\newcommand{\Ch}{\mathit{Ch}}
\newcommand{\dom}{\mathsf{dom}}
\newcommand{\mEvents}{{E}}
\newcommand{\tEvents}{E}
\newcommand{\mlabel}{\lambda}
\newcommand{\lab}{\lambda}
\newcommand{\loclabel}{\ell}
\newcommand{\tlabel}{\lambda}
\newcommand{\proc}{\mathsf{proc}}
\newcommand{\procsim}{\sim_{\proc}}
\newcommand{\msgsim}{\sim_{\mathsf{msg}}}
\newcommand{\procrel}{\to}
\newcommand{\mprocrel}{\to}
\newcommand{\mprocle}{\le_{\proc}}
\newcommand{\mproclt}{<_{\proc}}
\newcommand{\tprocrel}{\to}
\newcommand{\tprocle}{\le}
\newcommand{\msgrel}{\lhd}
\renewcommand{\phi}{\varphi}
\newcommand{\compl}[1]{\llbracket #1 \rrbracket}
\newcommand{\tEMSO}{\textup{EMSO}_{\mathsf{trace}}}
\newcommand{\mEMSO}{\textup{EMSO}_{\mathsf{msc}}}
\newcommand{\tFO}{\textup{FO}_{\mathsf{trace}}}
\newcommand{\mFO}{\textup{FO}_{\mathsf{msc}}}
\newcommand{\mscof}[1]{\mathit{msc}(#1)}
\newcommand{\unm}[1]{\mathsf{unm}(#1)}
\newcommand{\loc}[1]{#1_\mathsf{local}}
\newcommand{\matched}[1]{#1_\mathsf{matched}}
\newcommand{\pending}[1]{#1_\mathsf{pending}}
\newcommand{\trace}{T}
\newcommand{\msc}{M}
\newcommand{\msca}{M}
\newcommand{\mscb}{N}
\newcommand{\mscconc}{\circ}
\newcommand{\HMSC}{\mathcal{H}}
\newcommand{\CFM}{\mathcal{A}}

\newcommand{\labproj}[3]{#1_{|(#2,#3)}}
\newcommand{\cmscproj}[2]{#1_{|#2}}

\newcommand{\mscrun}[1]{\mathit{cmscs}(#1)}
\newcommand{\mscrunH}[2]{\mathit{cmscs}_{#1}(#2)}

\newcommand{\MSCs}{\mathbb{MSC}}
\newcommand{\fpMSCs}{\textup{fp}\mathbb{MSC}}

\newcommand{\cMSCs}{\textup{c}\mathbb{MSC}}
\newcommand{\fMSCs}{\textup{f}\mathbb{MSC}}
\newcommand{\pMSCs}{\textup{c}\mathbb{MSC}}
\newcommand{\fcMSCs}{\textup{fc}\mathbb{MSC}}

\newcommand{\send}[2]{#1!#2}
\newcommand{\rec}[2]{#1?#2}
\newcommand{\locact}[1]{\mathit{loc}(#1)}

\newcommand{\Labels}{\mathit{Labels}}
\newcommand{\Attr}{\mathit{Attr}}
\newcommand{\Props}{\Lambda}

\newcommand{\expr}{\mathcal{E}}
\newcommand{\exprb}{\mathcal{F}}

\newcommand\labeq[2]{#1 \equiv_{Z,W}  #2}
\newcommand\proceq[2]{#1 \equiv_Z  #2}
\newcommand\msgeq[2]{#1 \equiv_Y  #2}
\newcommand\Weq[2]{#1 \equiv_W #2}
\newcommand\bord[2]{[#1] \rightsquigarrow [#2]}
\newcommand\bordn{{\rightsquigarrow}}
\newcommand\phir[1]{#1^{{}\sim x}}
\newcommand\phiw[1]{#1^{W}}
\newcommand\phinw[1]{#1^{\neg W}}

\newcommand{\leftend}{{\triangleright}} % \vdash
\newcommand{\blank}{\flat} % {{\sqcup}}

\begin{gpicture}[name=Ma,ignore]
  \gasset{Nframe=n,Nfill=n,AHnb=0}
  \node(p)(  0,3){$p$} \drawline(  0,0)(  0,-11)
  \node(q)(-10,3){$q$} \drawline(-10,0)(-10,-11)
  \node(r)( 10,3){$r$} \drawline( 10,0)( 10,-11)
  \node(M)(0,-15){$M_{a}^{!}$}
  \gasset{Nw=1.6,Nh=1.6,Nfill=y,ExtNL=y,NLdist=0.8,AHnb=1}
	\node[Nmarks=f,fangle= -30,NLangle= 20](0)(0,-7){$a$}
	\node[Nmarks=f,fangle=-150,NLangle=160](0)(0,-3){$a$}
\end{gpicture}
\begin{gpicture}[name=Mb,ignore]
  \gasset{Nframe=n,Nfill=n,AHnb=0}
  \node(p)(  0,3){$p$} \drawline(  0,0)(  0,-11)
  \node(q)(-10,3){$q$} \drawline(-10,0)(-10,-11)
  \node(r)( 10,3){$r$} \drawline( 10,0)( 10,-11)
  \node(M)(0,-15){$M_{b}^{?}$}
  \gasset{Nw=1.6,Nh=1.6,Nfill=y,ExtNL=y,NLdist=0.8,AHnb=1}
	\node[Nmarks=i,iangle= 30,NLangle= -20](0)(0,-8){$b$}
	\node[Nmarks=i,iangle=150,NLangle=-160](0)(0,-4){$b$}
\end{gpicture}
\begin{gpicture}[name=Maf,ignore]
  \gasset{Nframe=n,Nfill=n,AHnb=0}
  \node(p)(  0,3){$p$} \drawline(  0,0)(  0,-16)
  \node(q)(-10,3){$q$} \drawline(-10,0)(-10,-16)
  \node(r)( 10,3){$r$} \drawline( 10,0)( 10,-16)
  \node(M)(0,-20){$M_{a}^{f}$}
  \gasset{Nw=1.6,Nh=1.6,Nfill=y,ExtNL=y,NLdist=1.0,AHnb=1}
	\node[Nmarks=i,iangle= 30,NLangle=-10](0)(-10, -3){$a$}
	\node[Nmarks=f,fangle=-30,NLangle=  0](0)(-10, -7){}
	\node[Nmarks=f,fangle=-30,NLangle=180](0)(-10,-10){$f(a)\Big\{$}
	\node[Nmarks=f,fangle=-30,NLangle=  0](0)(-10,-13){}
\end{gpicture}
\begin{gpicture}[name=Mag,ignore]
  \gasset{Nframe=n,Nfill=n,AHnb=0}
  \node(p)(  0,3){$p$} \drawline(  0,0)(  0,-16)
  \node(q)(-10,3){$q$} \drawline(-10,0)(-10,-16)
  \node(r)( 10,3){$r$} \drawline( 10,0)( 10,-16)
  \node(M)(0,-20){$M_{a}^{g}$}
  \gasset{Nw=1.6,Nh=1.6,Nfill=y,ExtNL=y,NLdist=1.0,AHnb=1}
	\node[Nmarks=i,iangle=150,NLangle=-170](0)(10, -3){$a$}
	\node[Nmarks=f,fangle=-150,NLangle=  0](0)(10, -7){}
	\node[Nmarks=f,fangle=-150,NLangle=  0](0)(10,-10){$\Big\}g(a)$}
	\node[Nmarks=f,fangle=-150,NLangle=  0](0)(10,-13){}
\end{gpicture}
\begin{gpicture}[name=flat-HMSC-PCP,ignore]
  \gasset{Nw=6,Nh=6,loopdiam=6,loopwidth=5}
  \node[Nmarks=i](1)(0,0){1}
  \node(2)(20,0){2}
  \node(3)(40,0){3}
  \node(4)(60,0){3}
  \node[Nmarks=f](5)(80,0){5}
  \drawloop(1){$M_{a}^{!}$}
  \drawedge(1,2){$M_{a}^{!}$}
  \drawloop(2){$M_{a}^{f}$}
  \drawedge(2,3){$M_{a}^{f}$}
  \drawloop(3){$M_{a}^{g}$}
  \drawedge(3,4){$M_{a}^{g}$}
  \drawloop(4){$M_{b}^{?}$}
  \drawedge(4,5){$M_{b}^{?}$}
\end{gpicture}
\begin{gpicture}[name=MrunTM,ignore]
  \gasset{Nframe=n,Nfill=n,AHnb=0}
  \unitlength=2mm
  \node(p)(-2,8){$p$} \drawline(0,8)(61,8)
  \node(q)(-2,0){$q$} \drawline(0,0)(61,0)
  \node(C)( 6.5,11.5){$\textcolor{blue}{C_{0}}$}
  \node(C)(18.5,11.5){$\textcolor{red}{C_{1}}=\textcolor{blue}{C_{1}}$}
  \node(C)(30.5,11.5){$\textcolor{red}{C_{2}}=\textcolor{blue}{C_{2}}$}
  \node(C)(42.5,11.5){$\textcolor{red}{C_{3}}=\textcolor{blue}{C_{3}}$}
  \node(C)(54.5,11.5){$\textcolor{red}{C_{4}}$}
  \multiput(0,0)(12,0){5}{%
    \node(p)(6.7,8.8){$\cdots$}
    \node(C)(6.5,9.5){$\overbrace{\hspace{19mm}}$}
  }
  \node(C)(17.5,-3.5){$\textcolor{blue}{C_{0}}\vdash\textcolor{red}{C_{1}}$}
  \node(C)(29.5,-3.5){$\textcolor{blue}{C_{1}}\vdash\textcolor{red}{C_{2}}$}
  \node(C)(41.5,-3.5){$\textcolor{blue}{C_{2}}\vdash\textcolor{red}{C_{3}}$}
  \node(C)(53.5,-3.5){$\textcolor{blue}{C_{3}}\vdash\textcolor{red}{C_{4}}$}
  \multiput(0,0)(12,0){4}{%
    \node(q)(17.7,-0.8){$\cdots$}
    \node(C)(17.5,-1.5){$\underbrace{\hspace{19mm}}$}
  }
  \gasset{Nw=0.6,Nh=0.6,Nfill=y,ExtNL=y,NLdist=0.5,AHnb=1}
  \multiput(0,0)(12,0){5}{%
    \node(p)( 1,8){}\node(q)(11,0){}\drawedge[linewidth=0.14](p,q){}
    \node(p)(12,8){}\node(q)(12,0){}\drawedge[linewidth=0.14](q,p){}
  }
  \gasset{fillcolor=blue,linecolor=blue}
  \multiput(0,0)(12,0){4}{%
    \node(p)( 3,8){}\node(q)(13,0){}\drawedge(p,q){}
    \node(p)( 5,8){}\node(q)(15,0){}\drawedge(p,q){}
    \node(p)( 9,8){}\node(q)(19,0){}\drawedge(p,q){}
    \node(p)(11,8){}\node(q)(21,0){}\drawedge(p,q){}
  }
  \gasset{fillcolor=red,linecolor=red}
  \multiput(12,0)(12,0){4}{%
    \node(p)(2,8){}\node(q)(2,0){}\drawedge(q,p){}
    \node(p)(4,8){}\node(q)(4,0){}\drawedge(q,p){}
    \node(p)(8,8){}\node(q)(8,0){}\drawedge(q,p){}
    \node(p)(10,8){}\node(q)(10,0){}\drawedge(q,p){}
  }
\end{gpicture}
\begin{gpicture}[name=Msharpi,ignore]
  \gasset{Nframe=n,Nfill=n,AHnb=0}
  \unitlength=2mm
  \node(p)(-1,4){$p$} \drawline(0,4)(4,4)
  \node(q)(-1,0){$q$} \drawline(0,0)(4,0)
  \node(M)(-5,2){$M_{\#}^{i}$}
  \gasset{Nw=0.6,Nh=0.6,Nfill=y,ExtNL=y,NLdist=1.2,AHnb=1}
	\node[Nmarks=f,fangle=-45,NLangle=-70](0)(1,4){$\scriptstyle{\#}$}
\end{gpicture}
\begin{gpicture}[name=Msharp,ignore]
  \gasset{Nframe=n,Nfill=n,AHnb=0}
  \unitlength=2mm
  \node(p)(-1,4){$p$} \drawline(0,4)(8,4)
  \node(q)(-1,0){$q$} \drawline(0,0)(8,0)
  \node(M)(-5,2){$M_{\#}$}
  \gasset{Nw=0.6,Nh=0.6,Nfill=y,ExtNL=y,NLdist=1.2,AHnb=1}
	\node[Nmarks=i,iangle=135,NLangle=110](0)(3,0){$\scriptstyle{\#}$}
  \node(p)(4,4){}\node(q)(4,0){}\drawedge(q,p){}
	\node[Nmarks=f,fangle=-45,NLangle=-70](0)(5,4){$\scriptstyle{\#}$}
\end{gpicture}
\begin{gpicture}[name=Msharpf,ignore]
  \gasset{Nframe=n,Nfill=n,AHnb=0}
  \unitlength=2mm
  \node(p)(-1,4){$p$} \drawline(0,4)(5,4)
  \node(q)(-1,0){$q$} \drawline(0,0)(5,0)
  \node(M)(-5,2){$M_{\#}^{f}$}
  \gasset{Nw=0.6,Nh=0.6,Nfill=y,ExtNL=y,NLdist=1.2,AHnb=1}
	\node[Nmarks=i,iangle=135,NLangle=110](0)(3,0){$\scriptstyle{\#}$}
  \node(p)(4,4){}\node(q)(4,0){}\drawedge(q,p){}
\end{gpicture}
\begin{gpicture}[name=Mgammai,ignore]
  \gasset{Nframe=n,Nfill=n,AHnb=0}
  \unitlength=2mm
  \node(p)(-1,4){$p$} \drawline(0,4)(4,4)
  \node(q)(-1,0){$q$} \drawline(0,0)(4,0)
  \node(M)(-5,2){$M_{\gamma}^{i}$}
  \gasset{Nw=0.6,Nh=0.6,Nfill=y,ExtNL=y,NLdist=1.2,AHnb=1}
  \gasset{fillcolor=blue,linecolor=blue}
	\node[Nmarks=f,fangle=-45,NLangle=-70](0)(1,4){\color{blue}$\scriptstyle{\gamma}$}
\end{gpicture}
\begin{gpicture}[name=Mgammap,ignore]
  \gasset{Nframe=n,Nfill=n,AHnb=0}
  \unitlength=2mm
  \node(p)(-1,4){$p$} \drawline(0,4)(6,4)
  \node(q)(-1,0){$q$} \drawline(0,0)(6,0)
  \node(M)(-5,2){$M_{\gamma}^{p}$}
  \gasset{Nw=0.6,Nh=0.6,Nfill=y,ExtNL=y,NLdist=1.2,AHnb=1}
  \gasset{fillcolor=red,linecolor=red}
	\node[Nmarks=i,iangle=-90,NLangle=-110](0)(2,4){\color{red}$\scriptstyle{\gamma}$}
  \gasset{fillcolor=blue,linecolor=blue}
	\node[Nmarks=f,fangle=-45,NLangle= -70](0)(3,4){\color{blue}$\scriptstyle{\gamma}$}
\end{gpicture}
\begin{gpicture}[name=Mgammaf,ignore]
  \gasset{Nframe=n,Nfill=n,AHnb=0}
  \unitlength=2mm
  \node(p)(-1,4){$p$} \drawline(0,4)(4,4)
  \node(q)(-1,0){$q$} \drawline(0,0)(4,0)
  \node(M)(-5,2){$M_{\gamma}^{f}$}
  \gasset{Nw=0.6,Nh=0.6,Nfill=y,ExtNL=y,NLdist=1.2,AHnb=1}
  \gasset{fillcolor=red,linecolor=red}
	\node[Nmarks=i,iangle=-90,NLangle=-110](0)(2,4){\color{red}$\scriptstyle{\gamma}$}
\end{gpicture}
\begin{gpicture}[name=Mgammaq,ignore]
  \gasset{Nframe=n,Nfill=n,AHnb=0}
  \unitlength=2mm
  \node(p)(-1,4){$p$} \drawline(0,4)(6,4)
  \node(q)(-1,0){$q$} \drawline(0,0)(6,0)
  \node(M)(-5,2){$M_{\gamma}^{q}$}
  \gasset{Nw=0.6,Nh=0.6,Nfill=y,ExtNL=y,NLdist=1.2,AHnb=1}
  \gasset{fillcolor=blue,linecolor=blue}
	\node[Nmarks=i,iangle=135,NLangle=110](0)(3,0){\color{blue}$\scriptstyle{\gamma}$}
  \gasset{fillcolor=red,linecolor=red}
	\node[Nmarks=f,fangle=90,NLangle=60](0)(4,0){\color{red}$\scriptstyle{\gamma}$}
\end{gpicture}
\begin{gpicture}[name=Mdelta,ignore]
  \gasset{Nframe=n,Nfill=n,AHnb=0}
  \unitlength=2mm
  \node(p)(-1,4){$p$} \drawline(0,4)(18,4)
  \node(q)(-1,0){$q$} \drawline(0,0)(18,0)
  \node(M)(-5,2){$M_{\delta}^{q}$}
  \gasset{Nw=0.6,Nh=0.6,Nfill=y,ExtNL=y,NLdist=1.2,AHnb=1}
  \gasset{fillcolor=blue,linecolor=blue}
	\node[Nmarks=i,iangle=135,NLangle=110](0)(3,0){\color{blue}$\scriptstyle{\alpha_{1}}$}
	\node[Nmarks=i,iangle=135,NLangle=110](0)(9,0){\color{blue}$\scriptstyle{\alpha_{2}}$}
	\node[Nmarks=i,iangle=135,NLangle=110](0)(15,0){\color{blue}$\scriptstyle{\alpha_{3}}$}
  \gasset{fillcolor=red,linecolor=red}
	\node[Nmarks=f,fangle=90,NLangle=60](0)(4,0){\color{red}$\scriptstyle{\beta_{1}}$}
	\node[Nmarks=f,fangle=90,NLangle=60](0)(10,0){\color{red}$\scriptstyle{\beta_{2}}$}
	\node[Nmarks=f,fangle=90,NLangle=60](0)(16,0){\color{red}$\scriptstyle{\beta_{3}}$}
\end{gpicture}

% cMSC examples
\begin{gpicture}[name=cMSC-example-M1,ignore]
  \gasset{Nframe=n,Nfill=n,AHnb=0}
  \node(p)( 0,3){$p$} \drawline(  0,0)(  0,-15)
  \node(q)(10,3){$q$} \drawline(10,0)(10,-15)
  \node(M)(5,-20){$M_1$}
  \node(e1)(-3,-3){$e$}
  \node(e1)(13,-6){$f$}
  \node(e1)(-3,-10){$e'$}
  \gasset{Nw=1.6,Nh=1.6,Nfill=y,ExtNL=y,NLdist=0.8,AHnb=1}
	\node[Nmarks=f,fangle= -20,NLangle= 20](0)(0,-10){$a$}
    \node(p)(0,-3){}\node(q)(10,-6){}\drawedge(p,q){}
\end{gpicture}
\begin{gpicture}[name=cMSC-example-M2,ignore]
  \gasset{Nframe=n,Nfill=n,AHnb=0}
  \node(p)( 0,3){$p$} \drawline(  0,0)(  0,-15)
  \node(q)(10,3){$q$} \drawline(10,0)(10,-15)
  \node(M)(5,-20){$M_2$}%   
  \gasset{Nw=1.6,Nh=1.6,Nfill=y,ExtNL=y,NLdist=0.8,AHnb=1}
	\node[Nmarks=f,fangle= -20,NLangle= 20](0)(0,-6){$a$}
\end{gpicture}
\begin{gpicture}[name=cMSC-example-M3,ignore]
  \gasset{Nframe=n,Nfill=n,AHnb=0}
  \node(p)( 0,3){$p$} \drawline(  0,0)(  0,-15)
  \node(q)(10,3){$q$} \drawline(10,0)(10,-15)
  \node(M)(5,-20){$M_3$}
  \gasset{Nw=1.6,Nh=1.6,Nfill=y,ExtNL=y,NLdist=0.8,AHnb=1}
 \node[Nmarks=i,iangle=160,NLangle=-170](0)(10, -10){$a~$}
\end{gpicture}
\begin{gpicture}[name=cMSC-example-M4,ignore]
  \gasset{Nframe=n,Nfill=n,AHnb=0}
  \node(p)( 0,3){$p$} \drawline(  0,0)(  0,-15)
  \node(q)(10,3){$q$} \drawline(10,0)(10,-15)
  \node(M)(5,-20){$M_4$}%   
  \gasset{Nw=1.6,Nh=1.6,Nfill=y,ExtNL=y,NLdist=0.8,AHnb=1}
    \node(p)(0,-3){}\node(q)(10,-6){}\drawedge(p,q){}
\end{gpicture}
\begin{gpicture}[name=cMSC-example-M5,ignore]
  \gasset{Nframe=n,Nfill=n,AHnb=0}
  \node(p)( 0,3){$p$} \drawline(  0,0)(  0,-15)
  \node(q)(10,3){$q$} \drawline(10,0)(10,-15)
  \node(M)(5,-20){$M_5$}
  \gasset{Nw=1.6,Nh=1.6,Nfill=y,ExtNL=y,NLdist=0.8,AHnb=1}
\node[Nmarks=f,fangle= -20,NLangle= 20](0)(0,-10){$a$}
 \node[Nmarks=i,iangle=160,NLangle=-170](0)(10, -6){$a~$}
\end{gpicture}%
\begin{gpicture}[name=cMSC-example-M6,ignore]
  \gasset{Nframe=n,Nfill=n,AHnb=0}
  \node(p)( 0,3){$p$} \drawline(  0,0)(  0,-15)
  \node(q)(10,3){$q$} \drawline(10,0)(10,-15)
  \node(M)(5,-20){$M_6$}
  \gasset{Nw=1.6,Nh=1.6,Nfill=y,ExtNL=y,NLdist=0.8,AHnb=1}
    \node(p)(0,-3){}\node(q)(10,-6){}\drawedge(p,q){}
    \node(p)(0,-10){}\node(q)(10,-13){}\drawedge(p,q){}
\end{gpicture}

% HMSC example
\begin{gpicture}[name=HMSC-example-M,ignore]
  \gasset{Nframe=n,Nfill=n,AHnb=0}
  \node(p)( 0,3){$p$} \drawline(  0,0)(  0,-12)
  \node(q)(10,3){$q$} \drawline(10,0)(10,-12)
  \node(M)(5,-16){$M$}
  \gasset{Nw=1.6,Nh=1.6,Nfill=y,ExtNL=y,NLdist=0.8,AHnb=1}
  \node(p)(0,0){}\node(q)(10,-6){}\drawedge(p,q){}
  \node(p)(0,-3){}\node(q)(10,-9){}\drawedge(p,q){}
  \node(p)(0,-6){}\node(q)(10,-12){}\drawedge(p,q){}
  
  \node(p)(10,-0){}\node(q)(0,-9){}\drawedge(p,q){}
  \node(p)(10,-3){}\node(q)(0,-12){}\drawedge(p,q){}
\end{gpicture}
% =====
\begin{gpicture}[name=HMSC-example-M1,ignore]
  \gasset{Nframe=n,Nfill=n,AHnb=0}
  \node(p)( 0,3){$p$} \drawline(  0,0)(  0,-12)
  \node(q)(10,3){$q$} \drawline(10,0)(10,-12)
  \node(M)(5,-15){$M_p$}
  \gasset{Nw=1.6,Nh=1.6,Nfill=y,ExtNL=y,NLdist=0.8,AHnb=1}
	\node[Nmarks=f,fangle= -30,NLangle= 20](0)(0,-5){$a$}
\end{gpicture}
\begin{gpicture}[name=HMSC-example-M2,ignore]
  \gasset{Nframe=n,Nfill=n,AHnb=0}
  \node(p)( 0,3){$p$} \drawline( 0,0)( 0,-12)
  \node(q)(10,3){$q$} \drawline(10,0)(10,-12)
  \node(M)(5,-15){$M_q$}
  \gasset{Nw=1.6,Nh=1.6,Nfill=y,ExtNL=y,NLdist=0.8,AHnb=1}
	\node[Nmarks=f,fangle=-150,NLangle=160](0)(10,-5){$b$}
\end{gpicture}
\begin{gpicture}[name=HMSC-example-M3,ignore]
  \gasset{Nframe=n,Nfill=n,AHnb=0}
  \node(p)( 0,3){$p$} \drawline(  0,0)(  0,-12)
  \node(q)(10,3){$q$} \drawline(10,0)(10,-12)
  \node(M)(5,-15){$M_p'$}
  \gasset{Nw=1.6,Nh=1.6,Nfill=y,ExtNL=y,NLdist=0.8,AHnb=1}
  	\node[Nmarks=i,iangle=150,NLangle=-170](0)(10, -5){$a$}
\end{gpicture}
\begin{gpicture}[name=HMSC-example-M4,ignore]
  \gasset{Nframe=n,Nfill=n,AHnb=0}
  \node(p)( 0,3){$p$} \drawline(  0,0)(  0,-12)
  \node(q)(10,3){$q$} \drawline(10,0)(10,-12)
  \node(M)(5,-15){$M_q'$}
  \gasset{Nw=1.6,Nh=1.6,Nfill=y,ExtNL=y,NLdist=0.8,AHnb=1}
	\node[Nmarks=i,iangle= 30,NLangle=-10](0)(0, -5){$b$}
\end{gpicture}
\begin{gpicture}[name=HMSC-example,ignore]
  \gasset{Nw=6,Nh=6,loopdiam=6,loopwidth=5}
  \node[Nmarks=i](1)(0,0){1}
  \node(2)(20,0){2}
  \node(3)(40,0){3}
  \node(4)(60,0){4}
  \node[Nmarks=f](5)(80,0){5}
  \drawloop(1){$M_p$}
  \drawedge(1,2){$M_p$}
  \drawloop(2){$M_q$}
  \drawedge(2,3){$M_q$}
  \drawloop(3){$M_p'$}
  \drawedge(3,4){$M_p'$}
  \drawloop(4){$M_q'$}
  \drawedge(4,5){$M_q'$}
\end{gpicture}

\begin{gpicture}[name=CFM-example,ignore]
  \gasset{Nw=6,Nh=6,loopdiam=6,loopwidth=5}
  \node[Nmarks=i](1)(0,0){1}
  \node[Nmarks=f](2)(20,0){2}
  \drawloop(1){$p!q, a$}
  \drawedge(1,2){$p!q, a$}
  \drawloop(2){$p?q, b$}
    
  \node[Nmarks=i](1)(60,0){1}
  \node[Nmarks=f](2)(80,0){2}
  \drawloop(1){$q!p, b$}
  \drawedge(1,2){$q!p, b$}
  \drawloop(2){$q?p, a$}
  
  \gasset{Nframe=n,Nfill=n,AHnb=0}
  \node(Ap)(-12,0){$\CFM_p$:}
  \node(Aq)(48,0){$\CFM_q$:}
\end{gpicture}

\vspace{-14ex}

\section{Introduction}\label{sec:introduction}

Message Sequence Charts (MSCs) provide a visual formalism for representing interactions between processes in concurrent or distributed systems that communicate through FIFO channels. An MSC illustrates a potential execution by recording send, receive, and local events along the relative timelines of each process, with each pair of matching send and receive events connected by an arrow. This intuitive representation makes MSCs appealing for specifying the behavior of such systems.

Specification languages that extend MSCs, such as high-level message sequence charts
(HMSCs), also known as message sequence graphs (MSGs), have been extensively studied.
HMSCs enrich MSCs by introducing constructs like choice, concatenation, and iteration,
similar to automata or regular expressions.
 This extension allows for specifying a set of behaviors, whether desired
(positive specification) or prohibited (negative specification), while maintaining a
global view of the system.
Despite the high-level, declarative view provided by HMSCs, actual system implementations
operate at a lower, more granular level.  Automata models, mainly communicating
finite-state machines (CFMs), are used to abstract these implementations.  In these
models, transitions are no longer labeled by MSCs but by atomic actions such as send and
receive events.  The implementation must ensure that these atomic actions align correctly
to meet the global specifications.

A significant amount of research has studied the relationships between different
specification languages and their translations into CFMs, e.g.,\cite{Morin02,Genest05,GenestKM06,HenriksenMKST05,Kuske03,DR1995,BolligFG21,GenestMSZ06,AlurEY03,AlurEY05,Lohrey03}. While translating specifications
directly into CFMs is referred to as the \emph{synthesis problem},
we will refer to the existence of an implementation as \emph{realizability}.
Realizability/synthesis are particularly valuable because they ensure implementations that are correct by design.
Early results focused on HMSCs as the specification language and communicating finite-state
machines as the implementation model, under the constraint that the automata could not
carry additional message contents beyond those specified in the HMSC
\cite{AlurEY03,AlurEY05,Lohrey03}.

When CFMs are allowed to include additional message contents, they become more expressive, leading to results analogous to Büchi and Kleene theorems.\footnote{Another variant of the realizability problem permits implementations to include additional communication, allowing for send and receive events not specified initially~\cite{Genest05}.} These results establish an expressive equivalence between HMSCs, monadic second-order logic, and CFMs, assuming bounded channels. There is, however, an important distinction in the type of channel bounds considered: universally bounded channels require that no execution exceeds a specified bound \cite{HenriksenMKST05,Kuske03}. In contrast, existentially bounded channels allow for executions with an equivalent bounded one, meaning they can be reordered to fit within the bound \cite{GenestKM06}. Note that these results have interesting connections to the theory of star-connected rational expressions over Mazurkiewicz traces and their implementations via Zielonka automata in shared-memory systems \cite{Kuske03,GenestKM06,DR1995}. Moreover, they have been used as a framework to establish corresponding results in the realm of multiparty session types \cite{HYC16,SZ22}. While research on HMSCs has primarily addressed MSC languages with bounded channels, it was shown in \cite{BolligFG21} that any first-order definable property, giving access to the ``happen-before'' relation and without assuming any channel capacity, can be implemented as a CFM.

In this paper, we identify a class of HMSCs that define languages allowing for unbounded
channels, extending beyond existentially bounded channels while guaranteeing effective
translation into a CFM implementation.  Similarly to globally cooperative
HMSCs~\cite{GenestKM06,GenestMSZ06}, this class is characterized by a restriction on the
iteration to sets of connected MSCs, inspired by Mazurkiewicz trace theory.  Intuitively,
this restriction facilitates process communication and prevents the global HMSC
specification from enforcing patterns, like an equal number of messages, between groups of
processes that do not communicate sufficiently.  However, our approach introduces a novel
scope for iteration, relying on a graph-based view of MSC connectedness (nodes are events 
of the MSC and edges are induced by process successor and matched send-receive pairs) rather than a
traditional communication graph (nodes are processes and edges are induced by possibly 
unmatched send-receive events).  To establish our main result, we translate HMSCs into
existential monadic second-order logic and leverage \cite[Theorem~4]{BolligFG21}.

It is important to note that, like in \cite{GenestKM06} over existentially bounded channels, the implementations derived from our approach are inherently non-deterministic. Deterministic machines can only be obtained under universally bounded channels \cite{GenestKM07,HenriksenMKST05}. Moreover, for deadlock-free implementations, additional constraints are required (e.g., \cite{GenestMSZ06, BaudruM07}).

Besides the realizability problem, i.e., the problem of translating HMSCs into CFMs, we also address another fundamental problem: \emph{satisfiability}. Satisfiability consists of asking whether a given HMSC defines \emph{some} behavior at all. It is akin to reachability or nonemptiness problems (e.g., in Petri nets or automata). However, it turns out that these problems are undecidable for unbounded channels already under strong restrictions on HMSCs. This also underpins the importance of positive findings concerning the realizability problem.

\paragraph{Outline.}

Section~\ref{sec:prel} introduces the fundamental concepts, including (high-level) message
sequence charts (HMSCs), communicating finite-state machines, and (existential)
monadic second-order logic.
Section~\ref{sec:main-results} states our main results for both realizability and
satisfiability and highlights how they compare with previous work. The detailed proofs are
presented in subsequent sections. The positive realizability result is
obtained in several steps.
In Section~\ref{sec:closure}, we
provide closure properties of MSC languages definable in existential monadic second-order
logic. In Section~\ref{sec:realizability}, these closure properties are exploited
for an inductive translation of HMSCs into CFMs.
The undecidability proofs concerning satisfiability are presented in Section~\ref{sec:satisfiability}.
Finally, Section~\ref{sec:conclusion} discusses
open problems and potential directions for future research.

\section{Preliminary Definitions}\label{sec:prel}

We fix a nonempty finite set of \emph{processes} $\Procs$.
We assume P2P communication through \emph{channels} from
$\Ch = \{(p,q) \in \Procs \times \Procs \mid p \neq q\}$.
The set of \emph{actions} (or \emph{action types}) of process $p \in \Procs$ is
$\Act_p = \Send_p \cup \Rec_p$ where
$\Send_p = \{ \send{p}{q} \mid (p,q) \in \Ch\}$
is the set of send actions and
$\Rec_p = \{ \rec{p}{q} \mid (q,p) \in \Ch\}$ is the set of receive actions.
Here, $\send{p}{q}$ ($\rec{p}{q}$, respectively)
denotes that process $p$ sends a message to (receives a message from, respectively)
process $q$.
We let $\Send = \bigcup_{p \in \Procs} \Send_p$ and
$\Rec = \bigcup_{p \in \Procs} \Rec_p$. Finally,
$\Act = \Send \cup \Rec$ is the set of all actions.

\subsection{Message Sequence Charts}

The atomic building blocks of HMSCs are MSCs. An MSC denotes a fragment of a communication scenario. As a fragment, it can be \emph{compositional}~\cite{GunterMP03} (also referred to as partial) in the sense that a send or receive event is not necessarily matched by a corresponding communication event. However, such an event may later be matched when combined with other compositional MSCs within an HMSC.
The following definition is illustrated by Example~\ref{ex:msc-example} below.

\begin{definition}[compositional MSC]\label{def:cmsc}
  Let $\Msg$ be a finite set of \emph{messages}.
  A \emph{compositional MSC (cMSC)} over $\Procs$ and $\Msg$
  is a tuple $\msc = (E,\le,\msgrel,\lab, \mu)$.  Here, $E$ is the nonempty, finite or countably infinite,
  set of \emph{events}, which is equipped with a partial order ${\le} \subseteq E \times
  E$.  Furthermore, $\lab\colon E \to \Act$ is a labeling
  function, and $\mu\colon E \to \Msg$ is a partial function.

  For $p \in \Procs$, $E_p = \{e \in E \mid \lab(e) \in \Act_p\}$ is the
  set of events executed by process $p$.  We denote the restriction of $\le$ to $E_p$ by ${\le_p}
  \subseteq E_p \times E_p$ and require that $(E_p,\le_p)$ be a total order that is finite
  or isomorphic to $(\mathbb{N},\leq)$.  By ${\procrel_p}$, we denote the direct-successor
  relation of $\le_p$, and we let ${\procrel}=\bigcup_{p\in\Procs}{\procrel}_{p}$.
  The relation ${\msgrel}\subseteq {\le}$ matches send and receive
  events according to a first-in-first-out (FIFO) policy: (i) for all $(e,f) \in {\msgrel}$, there is $(p,q)
  \in \Ch$ such that $\lab(e) = \send{p}{q}$ and $\lab(f) = \rec{q}{p}$; (ii) for all
  $(e,f),(e',f') \in {\msgrel}$ and $(p,q) \in \Ch$ such that $e,e' \in E_p$ and $f,f' \in
  E_q$, we have $e \le_p e'$ iff $f \le_q f'$.  The partial order $\le$ has to be equal to
  the reflexive transitive closure of ${\procrel}\cup{\msgrel}$.

  Note that there may be unmatched send and receive events, which we gather in the set
  $\unm{\msc} = \{e \in E \mid$ there is no $f \in E$ such that $e \msgrel f$ or $f \msgrel e\}$.
  Messages from $\Msg$ are used to label unmatched send and receive events\footnote{The use of messages from $\Msg$ to identify suitable communication events is an extension with respect to previous work. It increases the expressive power of HMSCs but does not introduce considerable additional technical complexity.}, which
  will allow us to selectively match send and receive events when concatenating several
  cMSCs.  We require that $\dom(\mu)=\unm{\msc}$.
  
  Finally, we require that, for all $(e,f) \in {\msgrel}$ and $g\in\unm{\msc}$,
  $\lab(g)=\lab(e)$ implies $e<g$, and $\lab(g)=\lab(f)$ implies $g<f$
  (which avoids unmatched events that cannot be	matched later on).  \qed
\end{definition}

We call a cMSC $\msc=(E,\le,\msgrel,\lab,\mu)$ an MSC if $\unm{\msc} = \emptyset$. We may then omit $\mu$ and just write $(E,\le,\msgrel,\lab)$.
We call $\msc$ \emph{finite} if $\mEvents$ is finite.
By $\pMSCs$, $\fcMSCs$, $\MSCs$ and $\fMSCs$, we denote the sets of compositional MSCs, finite
compositional MSCs, MSCs, and finite MSCs, respectively.  

\begin{example}\label{ex:msc-example}
Figure~\ref{fig:msc-examples} depicts six finite cMSCs over $\Procs = \{p, q\}$ and
$\Msg = \{a\}$. cMSC $M_1$ has the three events $e$, $e'$, and $f$. We have, e.g.,
$E_p = \{e, e'\}$, $e \le_p e'$, $e \procrel e'$,
$e \msgrel f$, $\lab(e) = \lab(e') = p!q$, and $\mu(e') = a$. Moreover,
$\unm{\msc_1} = \{e'\}$.
\end{example}

\begin{figure}[tb]
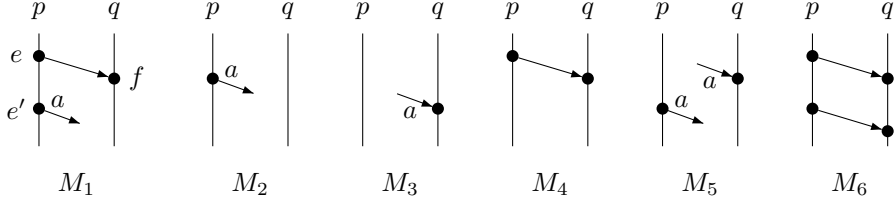

  \centering
  \gusepicture{cMSC-example-M1}\hspace{2em}
  \gusepicture{cMSC-example-M2}\hspace{2em}
  \gusepicture{cMSC-example-M3}\hspace{2em}
  \gusepicture{cMSC-example-M4}\hspace{2em}
  \gusepicture{cMSC-example-M5}\hspace{2em}
  \gusepicture{cMSC-example-M6}
  \caption{Example cMSCs.}
  \label{fig:msc-examples}
\end{figure}

\begin{remark}
All results presented in this paper hold when extending the definitions in two respects: First, we can add finitely many \emph{attributes}, where an attribute may provide additional information about an event. Second, we can include another type of events, \emph{internal} events, which are exclusive to a process and neither emit nor receive messages (such as ``enters critical region'').
\end{remark}

\paragraph{Concatenation.}
We define concatenation of cMSCs following \cite{GenestKM06} but taking into account messages.
For cMSCs $\msc_1=(E^1,\le^1,\msgrel^1,\lab^1,\mu^1) \in \fcMSCs$ and $\msc_2=(E^2,\le^2,\msgrel^2,\lab^2,\mu^2) \in \pMSCs$,
the \emph{concatenation} $\msc_1 \circ \msc_2 \subseteq \cMSCs$
is defined as the \emph{set} of cMSCs $\msc = (E,\le,\msgrel,\lab)$
such that the following hold:
\begin{itemize}[nosep]\itemsep=0.5ex
  \item $E=E^1 \uplus E^2$,
  
  \item ${\le_p} = {\le_p^1} \cup {\le_p^2} \cup (E_p^1 \times E_p^2)$ for all $p \in \Procs$,

  \item $(\msgrel^1 \cup \msgrel^2) \subseteq {\msgrel}$ and for all $(e,f) \in \msgrel
  \setminus (\msgrel^1 \cup \msgrel^2)$, we have $e \in E^1$, $f \in E^2$, and 
  $\mu^1(e) = \mu^2(f)$,
  
  \item for all $e\in E^i$ ($i\in\{1,2\}$), we have $\lab(e)=\lab^{i}(e)$ and,
  if $e\in\unm{\msc}$, then $\mu(e)=\mu^{i}(e)$.
\end{itemize}
\smallskip

In other words, the concatenation of two cMSCs is their ``vertical stacking'', where one is written below the other while possibly matching send events from the first with receive events from the second.

\begin{example}
Consider once again Figure~\ref{fig:msc-examples}.
The concatenation $M_2 \circ M_3$ equals $\{M_4, M_5\}$.
In $M_4$, a message edge has been added from the send to the receive event (note that synchronization message $a$ disappears), while in $M_5$, the edge has not been added.
On the other hand, $M_3 \circ M_2$ is the singleton set $\{M_5\}$ (a send event can only be matched with a receive event from a \emph{subsequent} cMSC). 
The cMSC $M_6$ is contained in the concatenation $M_1 \circ M_3$. In fact, $M_6$ is the only cMSC in $M_1 \circ M_3$: leaving the open receive event in $M_3$ unmatched would result in a structure where this event can never be matched via a concatenation, and which is, therefore, excluded according to Definition~\ref{def:cmsc} (cf.\ the very last condition).
\end{example}

We extend concatenation to sets $L_1 \subseteq \fcMSCs$ and $L_2 \subseteq \cMSCs$
via
$L_1 \circ L_2 = \bigcup_{M_1 \in L_1, M_2 \in L_2} M_1 \circ M_2$.
Abusing notation, we abbreviate $\{M\} \circ L$ and $L \circ \{M\}$
by $M \circ L$ and $L \circ M$, respectively. Note that concatenation is associative.

\paragraph{Infinite product.}
We extend concatenation to infinite sequences.  For all $i\geq1$, let
$\msc_i=(E^i,\le^i,\msgrel^i,\lab^i,\mu^i)\in\fcMSCs$ be \emph{finite} cMSCs.
The \emph{product} $\prod_{i\geq1}\msc_i \subseteq \cMSCs$ is defined as the \emph{set} of
cMSCs $\msc = (E,\le,\msgrel,\lab,\mu)$ such that the following hold:
\begin{itemize}[nosep]\itemsep=0.5ex 
  \item $E=\biguplus_{i\geq1}E^i$,
  
  \item ${\le_p} = \big( \bigcup_{i\geq1}{\le_p^i} \big) \cup
  \big( \bigcup_{1\leq i<j} (E_p^i \times E_p^j) \big)$
  for all $p \in \Procs$,

  \item $\big( \bigcup_{i\geq1}\msgrel^i \big) \subseteq {\msgrel}$ and for all $(e,f) \in
  \msgrel \setminus \big( \bigcup_{i\geq1}\msgrel^i \big)$, we have $e \in E^i$, $f \in
  E^j$ for some $1\leq i<j$, and $\mu^i(e) = \mu^j(f)$,
  
  \item for all $e\in E^i$ ($i\geq1$), we have $\lab(e)=\lab^{i}(e)$ and,
  if $e\in\unm{\msc}$, then $\mu(e)=\mu^{i}(e)$.
\end{itemize}
The infinite product is extended to languages $(L_i)_{i\geq1}\subseteq\fcMSCs$ as 
expected. The infinite product is also associative.

\subsection{High-Level Message Sequence Charts}

To obtain a convenient specification language, cMSCs are combined
towards HMSCs using constructs such as choice, concatenation, and iteration.
This gives rise to the notion of (compositional) high-level message sequence charts~\cite{GunterMP03}.

\begin{definition}
A \emph{high-level message sequence chart} (HMSC) over $\Procs$
is a tuple $\HMSC = (S, \iota, \Msg, R, F, F_\omega)$ where
$S$ is a finite set of states, $\iota$ is the initial state,
$\Msg$ is a finite set of messages,
$R \subseteq S \times \fcMSCs \times S$ is the finite transition relation
(recall that $\fcMSCs$ is the set of finite cMSC over $\Procs$ and $\Msg$),
and $F, F_\omega \subseteq S$ are sets of accepting states (one for finite paths, and one for infinite paths).
\end{definition}

A finite path in $\HMSC$ is a sequence
$\rho = s_0 \msc_1 s_1 \ldots \msc_n s_n \in S(\fcMSCs\, S)^\ast$
with $n \ge 1$, such that, for all $i \in \{1,\ldots,n\}$,
we have $(s_{i-1},\msc_i,s_i) \in R$. We say that $\rho$ is a path from
$s_0$ to $s_n$. Infinite paths are defined similarly.
A path is accepting if it starts in the initial state and,
(i) if it is finite and ends in an accepting state from $F$, (ii)
if it is infinite and visits an accepting state from $F_\omega$
infinitely often. The concatenation of all cMSCs in a path $\rho$
is denoted $\mscrun{\rho}$. Recall that $\mscrun{\rho}$ is a \emph{set} of 
cMSCs.
The language of $\HMSC$ is the set 
$L(\HMSC) = \bigcup \{\mscrun{\rho} \mid \rho \textup{ is an accepting path of } \HMSC\}
\cap \MSCs$. 
Note that $L(\HMSC)$ contains only MSCs, rather than all cMSCs.

\begin{example}
Figure~\ref{fig:hmsc-example} illustrates an HMSC $\HMSC$ over $\Procs = \{p, q\}$.
In particular, we have $\Msg = \{a, b\}$ and $F = \{5\}$. The figure also depicts
an MSC $M \in L(\HMSC)$.
\end{example}

It should be noted that, compared to previous definitions, adding messages to HMSCs in terms of $\Msg$ increases their expressive power when not restricting to channel-bounded languages. Messages in HMSCs are akin to messages in communicating finite-state machines, as defined below. In fact, with the use of messages, every communicating finite-state machine can be translated into an equivalent HMSC (though we do not delve into the details in this paper),
while this is in general not possible without messages \cite[Proposition~5.5.1]{Bollig2005}.
Moreover, it is easy to conceive an HMSC that generates a language that is non-regular in a certain sense. Consider a single-state HMSC with $\Procs = \{p_1, p_2, p_3, p_4 \}$, looping over an MSC containing two complete message edges: one from process $p_1$ to process $p_2$, and another from process $p_3$ to process $p_4$. The resulting language consists of MSCs in which the number of messages from $p_1$ to $p_2$ matches the number of messages from $p_3$ to $p_4$. This language is not realizable by communicating finite-state machines\footnote{In fact, it is not even definable in monadic second-order logic.}. Since there is no communication between the groups of processes 
$\{p_1, p_2\}$ and $\{p_3, p_4\}$, any asynchronous implementation of this language would require a counting mechanism to compare the unbounded number of messages.

To define ``feasible'' classes of HMSCs, we restrict loops to connected cMSCs.
We call a cMSC $\msc = (E,\le,\msgrel,\lab,\mu)$ \emph{connected} if the undirected graph
$(\mEvents,{\le} \cup {\le}^{-1})$ is connected.

\begin{definition}
An HMSC is called \emph{loop-connected} if, for all $s \in S$ and all 
finite
paths $\rho$ from $s$ to $s$,
all cMSCs in $\mscrun{\rho}$ are connected.
\end{definition}

\begin{example}
Every cMSC in Figure~\ref{fig:msc-examples} is connected, except $M_5$.
  The HMSC depicted in Figure~\ref{fig:hmsc-example} is loop-connected:
  All loops generate cMSCs where all events are on $p$ or all events are on $q$.
Any such cMSC is connected.
\end{example}

\begin{figure}[tb]
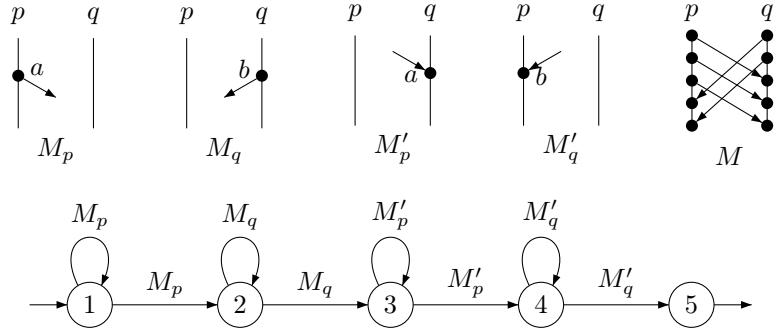

  \centering
  \gusepicture{HMSC-example-M1}\hfil
  \gusepicture{HMSC-example-M2}\hfil
  \gusepicture{HMSC-example-M3}\hfil
  \gusepicture{HMSC-example-M4}\hfil
  \gusepicture{HMSC-example-M}
  \\[3ex]
  \gusepicture{HMSC-example}
  \caption{Example HMSC that is loop-connected but not safe}
  \label{fig:hmsc-example}
\end{figure}

\subsection{Communicating Finite-State Machines}

Communicating finite-state machines (CFMs) are the operational counterpart to HMSCs
and represent a low-level model of distributed systems. While HMSCs offer the possibility
to connect specific send and receive events, synchronization in CFMs is solely accomplished
via FIFO channels. 

\newcommand{\Acc}{\mathit{Acc}}
\newcommand{\Com}{\mathit{Com}}
\newcommand{\cfmrun}{\tau}
\newcommand{\staterun}{\sigma}
\newcommand{\msgrun}{\mu}
\newcommand{\act}{a}
\newcommand{\trans}{\theta}
\newcommand{\tsource}{\mathit{source}}
\newcommand{\ttarget}{\mathit{target}}
\newcommand{\taction}{\mathit{action}}
\newcommand{\tmsg}{\mathit{message}}
\newcommand{\translabel}{a}

We give here an informal account of CFMs, as they are not needed for
our technical developments. A formal definition can be found in \cite{BolligFG21}.
Recall that we fixed the set $\Procs$ of processes.
In a CFM, every process $p \in\Procs$ is represented by a finite-state machine $\CFM_p$,
which allows it to execute actions from $\Act_p$ and to send and receive messages from
a set $\Msg$. Messages in transit from process $p$ to process $q$ are stored in the
(unbounded) FIFO channel $(p,q)$. However, like for HMSCs, they do not occur in (complete) MSCs.
 In addition, we have simple global acceptance conditions, which take into account
finite and/or infinite behaviors. The language of a CFM $\CFM$ is denoted by $L(\CFM)$.

\begin{example}
Figure~\ref{fig:cfm-example} shows a CFM $\CFM$ over $\Procs = \{p, q\}$ that is equivalent to the HMSC
$\HMSC$ from Figure~\ref{fig:hmsc-example}, i.e., $L(\CFM) = L(\HMSC)$.
\end{example}

\begin{figure}[tb]
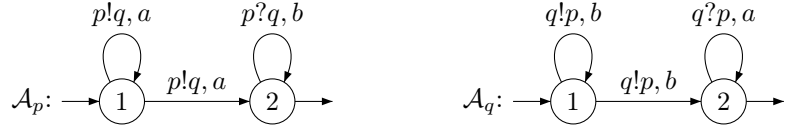

  \centering
  \gusepicture{CFM-example}
  \caption{An example CFM}
  \label{fig:cfm-example}
\end{figure}

\subsection{(Existential) Monadic Second-Order Logic}

It has been shown in \cite{BolligFG21} that CFMs are expressively equivalent to
existential monadic second-order logic (EMSO).  EMSO formulas over
given sets $\Procs$ of processes and $\Msg$ of messages are interpreted
over cMSCs over $\Procs$ and $\Msg$. They are of the form $\phi =
\exists X_1 \ldots \exists X_k.  \psi$.  Here, the $X_i$ are existentially quantified
second-order variables, interpreted as sets of events of a given cMSC.
Moreover, $\psi$ is a
first-order formula that (i) can make use of boolean connectives (disjunction,
conjunction, and negation), (ii) has access to the process order, direct process-successor relation,
and message relation in
terms of formulas $x \le_p y$, $x \procrel y$, and $x \msgrel y$ ($x,y$ being first-order variables
interpreted as events), (iii) can determine the type or message of an event in terms of formulas
$p!q(x)$, $p?q(x)$, and $a(x)$ with $a \in \Msg$ (which is satisfied if the event
representing $x$ is unmatched and carries message $a$), (iv) uses first-order quantification $\exists x$ and $\forall x$,
and (v) can relate first-order and second-order variables using atomic formulas of the
form $x \in X_i$.

If $\phi$ is a sentence, i.e., no variable is free in $\phi$, we let $L(\phi)$ denote
the set of MSCs $M \in \MSCs$ that satisfy $\phi$ (written $M \models \phi$).

\begin{example}
Below is an example EMSO sentence $\phi$ over $\Procs = \{p, q\}$
(and no messages)
such that $L(\phi) = L(\HMSC) = L(\CFM)$,
where $\HMSC$ is the HMSC from Figure~\ref{fig:hmsc-example}
and $\CFM$ is the CFM from Figure~\ref{fig:cfm-example}:
\[
\phi =
\left(
\begin{array}{rl}
& \exists x, y.\bigl(p?q(x) \wedge \max(x) \wedge q?p(y) \wedge \max(y)\bigr)\\[1ex]
\wedge & \neg \exists x, y. \bigl( p?q(x) \wedge p!q(y) \wedge x \le_p y \bigr)\\[1ex]
\wedge & \neg \exists x, y. \bigl( q?p(x) \wedge q!p(y) \wedge x \le_q y \bigr)
\end{array}
\right)
\]
Here, the abbreviation $\max(z) = \neg \exists z'. z \procrel z'$ states that event $z$ is maximal on its process.
Thus, $\phi$ says that a given MSC
\begin{itemize}
\item is finite,
\item has at least one event of the form $p?q$ (thus, at least one of the form $q!p$),
\item has at least one event of the form $q?p$ (thus, at least one of the form $p!q$),
\item on both processes, all send events are scheduled before all receive events.
\end{itemize}
%Note that the formula does not use messages, as we consider only the language of MSCs.
\end{example}

\begin{theorem}[\!\!\cite{BolligFG21}]\label{thm-EMSO-CFM}
  For every EMSO sentence $\phi$, one can effectively construct
  a CFM $\CFM$ such that $L(\phi) = L(\CFM)$.
\end{theorem}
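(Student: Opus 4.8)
The plan is to establish a Büchi--Elgot--Trakhtenbrot-style correspondence in two stages: first remove the existential set quantifiers, then translate the first-order kernel, which is where the real difficulty lies. For the first stage, write $\phi = \exists X_1 \cdots \exists X_k.\,\psi$ with $\psi$ first-order. The idea is to absorb the monadic quantifiers into the nondeterminism of the machine: I would extend the local alphabet of each process so that, upon executing an event, the machine guesses the subset of $\{1,\dots,k\}$ recording which $X_i$ contain that event. Over this enriched labeling the atoms $x \in X_i$ become ordinary unary tests, so $\psi$ is a plain first-order sentence over enriched MSCs. A CFM for $\psi$ then yields one for $\phi$ by projecting the enriched labels back to $\Act$; this uses closure of CFMs under such projections, which is immediate because the guessed annotation is pure local control information and never touches the channel contents. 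It therefore suffices to handle a single first-order sentence $\psi$.

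Translating $\psi$ is the heart of the argument and the reason the statement is nontrivial. The obstacle is the order: although the process orders $\le_p$ and the global order $\le$ (``happened-before'') are obtained from $\procrel$ and $\msgrel$ by transitive closure, they are \emph{global}, so a Gaifman graph using them directly has small diameter and the classical locality theorems (Hanf, Gaifman) yield nothing. The plan is to separate concerns. I would work with the \emph{sparse} graph $(\mEvents, \procrel, \msgrel)$, in which every event has bounded degree, put $\psi$ into a locality-based normal form over this sparse structure --- a Boolean combination of threshold counts of radius-$r$ sphere types, together with atoms asserting $\le_p$- or $\le$-comparability between events whose sphere types are fixed --- and evaluate the two kinds of conjuncts by different means. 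The local sphere-counting conjuncts are directly checkable by a CFM, since a bounded-radius $\procrel/\msgrel$-neighborhood becomes visible after finitely many steps.

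The genuinely hard part is the non-local conjuncts, i.e.\ the ordering atoms between events far apart in the sparse graph. My intended device is to exploit that channels are \emph{unbounded} only in capacity, not in the amount of \emph{control} information they must carry: each message is tagged with a bounded summary recording, for the finitely many ``marker'' events guessed by the machine, which of them already lie below the sending event in $\le$. This summary is updated along $\procrel$-edges and transmitted along $\msgrel$-edges, so that at each receive event the machine learns exactly the happened-before predecessors relevant to $\psi$; the process orders $\le_p$ are handled analogously by reading each process as a word. The acceptance condition then checks the guessed markers against the non-local requirements of the normal form. The main obstacle I anticipate is precisely proving that a \emph{bounded} number of markers and a \emph{bounded} per-message summary suffice to decide all ordering atoms of the normal form, uniformly over MSCs with unboundedly many events and unbounded channels --- that is, that the information about $\le$ needed by $\psi$ is finitely generated along the communication structure. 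Establishing this boundedness is the crux that distinguishes the result from a routine corollary of classical automata--logic theorems and that makes it specific to the MSC setting.
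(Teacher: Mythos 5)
A point of reference first: the paper does not prove Theorem~\ref{thm-EMSO-CFM} at all --- it is imported verbatim from \cite{BolligFG21} --- so your attempt has to be measured against the proof given there. Your first stage is correct and standard: absorbing the second-order quantifiers into nondeterministic guessing of an enriched labeling, and projecting back at the end, is exactly how one reduces the EMSO case to the first-order case, and CFMs (which may carry extra message contents) are indeed closed under such projections.

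The gap is in your second stage, and it is twofold. First, the normal form you invoke --- Boolean combinations of Hanf-style threshold counts of radius-$r$ sphere types over the sparse graph $(\mEvents,\procrel,\msgrel)$, together with ordering atoms ``between events whose sphere types are fixed'' --- is not a theorem, and you give no argument for it. Hanf/Gaifman locality needs the order relations to be absent from the vocabulary (as you yourself note, with $\le$ or $\le_p$ the Gaifman graph has small diameter), but once they are reintroduced as separate ``non-local conjuncts'' there is no mechanism confining them to finitely many guessed markers: atoms $x \le y$ and $x \le_p y$ occur under quantifiers ranging over \emph{all} events (e.g.\ $\forall x \forall y.\,(p!q(x) \wedge q?p(y)) \Rightarrow (x \le y \vee y \le x)$), so no bounded set of marker events can carry the information needed to evaluate them. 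The sphere-type route is essentially the proof of the older, weaker theorem of Bollig and Leucker that EMSO over the signature $\{\procrel,\msgrel\}$ \emph{without any order} equals CFMs; it does not extend to the logic of Theorem~\ref{thm-EMSO-CFM}, which contains $\le_p$ and (as used throughout Section~\ref{sec:closure}) the happened-before order $\le$, and that extension is precisely what makes \cite{BolligFG21} a substantial result. Second, the ``boundedness of summaries'' lemma you defer is not a technical detail to be filled in later: it \emph{is} the theorem. The actual proof in \cite{BolligFG21} resolves it by a completely different architecture: a Kamp-style expressive-completeness theorem showing that every first-order formula over MSCs is equivalent to a formula of star-free propositional dynamic logic (with converse and loop), followed by a compilation of each such syntactically constrained PDL modality into a CFM; the bounded-information phenomenon is \emph{proved} there for those specific modalities, not assumed for arbitrary ordering atoms. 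In short, your proposal correctly locates the crux but does not resolve it, and the normal form on which the whole plan rests is unsupported.
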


%%%%%%%%%%
\section{Main Results}
\label{sec:main-results}

We consider HMSCs as a specification language and CFMs
as a model of implementation. Accordingly, we address two types of questions:
\emph{satisfiability} of an HMSC (does a given HMSC produce any behavior?) and
\emph{realizability} of an HMSC language as a CFM (sometimes also referred to as
\emph{implementability}). While satisfiability is formulated as a decision problem,
realizability aims to identify classes of HMSCs that \emph{guarantee} the existence
of a corresponding CFM. In this section, we summarize our results
for both problems. The proofs are then spread over subsequent sections.

\subsection{Realizability of HMSCs}

Realizability results identify classes of HMSCs
that allow for an effective translation into equivalent CFMs.
In \cite{GenestKM06}, it was shown that all \emph{globally cooperative}
HMSCs are realizable in that sense.
While globally cooperative HMSCs specify existentially bounded behaviors
(where each MSC has some linearized execution that does not exceed a given channel capacity \cite{GenestKM06}),
our main result addresses realizability for the class of loop-connected HMSCs, which
allows for unbounded behaviors (like the language generated by the HMSC from Figure~\ref{fig:hmsc-example}).

\begin{theorem}
\label{thm:HMSC-to-CFM}
  Given a loop-connected HMSC $\HMSC$, we can effectively
  construct a CFM $\CFM$ such that $L(\HMSC) = L(\CFM)$.
\end{theorem}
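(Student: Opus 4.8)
The plan is to translate $\HMSC$ into an equivalent EMSO sentence and then invoke Theorem~\ref{thm-EMSO-CFM} to obtain the CFM. I would carry out the translation in two stages: first establish closure properties of EMSO-definable \emph{cMSC} languages, and then use these in an inductive, Kleene-style decomposition of $\HMSC$, arranging matters so that loop-connectedness is exactly what licenses the only delicate operation, namely iteration.

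For the closure properties I work with EMSO formulas interpreted over cMSCs (so that unmatched events and their messages are part of the data) and restrict to $\MSCs$ only at the very end. Closure under union is immediate (disjunction). For closure under concatenation, given $\phi_1,\phi_2$ defining $L_1\subseteq\fcMSCs$ and $L_2\subseteq\cMSCs$, I would guess a second-order variable marking the events of the left factor, check that this set is a prefix with respect to every $\le_p$, relativize $\phi_1$ and $\phi_2$ to the two parts, and additionally guess the messages of those events that are unmatched inside a factor but matched across the cut, so that the boundary matches respect the $\mu$-labels required by the definition of $\circ$. None of this needs connectedness.

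The crux is closure under iteration: if $\phi_L$ defines an EMSO-definable language $L\subseteq\fcMSCs$ of \emph{connected} cMSCs, then $L^{\ast}$ (and, for infinite products, $L^{\omega}$) is again EMSO-definable. Here I would existentially guess a set of ``cut'' edges among the $\procrel$- and $\msgrel$-edges, declaring the remaining edges internal; the intended meaning is that the blocks of the factorization are the connected components of the graph obtained by deleting the cut edges. I would then verify, via first-order constraints relativized to the block of a reference event, that (i)~the cut edges are oriented so that the blocks are linearly ordered and each $\le_p$ splits into consecutive block-intervals, (ii)~each block, read as a cMSC whose unmatched events are precisely those whose $\msgrel$-partner lies in another block or is absent, satisfies $\phi_L$, and (iii)~consecutive blocks match correctly. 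The completeness of this guess rests entirely on connectedness: because every block is connected, it survives as a single component once the cross-block edges are removed, so the intended factorization is faithfully recovered; without connectedness a block could shatter, and the guess would fail precisely in the non-MSO-definable counting situation flagged after the example HMSC. For $L^{\omega}$ I would additionally mark infinitely many block boundaries and impose a B\"uchi-style recurrence on them.

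For the inductive translation, I view $\HMSC$ as a finite automaton whose transitions carry finite cMSCs and whose product is $\circ$, and run a Kleene/state-elimination argument (with the standard B\"uchi decomposition $\bigcup_i U_i V_i^{\omega}$ for infinite paths) to build $L(\HMSC)$ from the atomic transition-cMSCs using $\cup$, $\circ$, ${}^{\ast}$, and ${}^{\omega}$, finally intersecting with $\MSCs$ (the EMSO-checkable condition $\unm{\cdot}=\emptyset$). The essential point is that every starred or $\omega$-iterated subexpression produced by this decomposition is a language of cycles through a fixed state $t$, i.e.\ of the form $\bigcup_{\rho}\mscrun{\rho}$ for paths $\rho$ from $t$ to $t$; by loop-connectedness all such cMSCs are connected, so the iteration lemma applies. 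Composing the closures yields an EMSO sentence for $L(\HMSC)$, and Theorem~\ref{thm-EMSO-CFM} converts it into the desired CFM. I expect parts (ii)--(iii) of the iteration lemma---relativizing $\phi_L$ to a single block and expressing ``same component'' within \emph{existential} MSO (rather than full MSO reachability) over MSCs---to be the main obstacle; connectedness makes the decomposition recoverable, but keeping the component structure inside EMSO is the step that requires genuine care.
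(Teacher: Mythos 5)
Your overall strategy is the paper's: establish EMSO closure under union, concatenation, and iteration of connected languages, decompose the HMSC by state elimination so that iteration is only ever applied to loop languages (connected by hypothesis), and finish with Theorem~\ref{thm-EMSO-CFM}. But your iteration lemma has a genuine gap, and it is exactly the point you defer at the end: you never show how ``$x$ and $y$ lie in the same block'' can be expressed in \emph{first-order} logic over the guessed data. Your conditions (i)--(iii) are all ``relativized to the block of a reference event,'' so the whole construction rests on this relation being FO-definable; plain reachability (``same connected component after deleting the cut edges'') is not FO-expressible, and no amount of care makes it so without a further idea. The paper's Lemma~\ref{lem:emso-iteration-connected} supplies that idea: guess a set $W$ of events that alternates blockwise on each process (rather than a set of cut edges), so that within one factor each process contributes a single interval lying in one $W$-block. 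A factor therefore consists of at most $n=|\Procs|$ such process segments, and \emph{connectedness of the factor} makes the graph of these segments (with edges given by internal messages) connected; hence any two events of the same factor are joined by a path of length at most $2n$ alternating between $\procsim$ (same process, same $W$-block) and $\msgsim$ (message edge not created by the composition). This bounded diameter is what makes ``same factor'' ($\sim$) FO-definable. Even then, $\sim$ is not automatically transitive: the paper adds an axiom ($\psi_2$) forcing factors to be intervals on each process and proves transitivity by a path-shortening argument, and it likewise expresses acyclicity of the induced order on factors using only cycles of bounded length ($\psi_3$). None of this appears in your sketch, and it is the actual mathematical content of the lemma --- it is precisely where connectedness is used, beyond the soundness/completeness intuition you give.

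Two smaller points. First, EMSO set variables range over \emph{events}, not edges, so your ``set of cut edges'' must be encoded; the paper does this with $W$ together with per-message sets $Y_m$ marking the endpoints of message edges created by the composition, and the $Y_m$'s serve double duty, restoring the original $\mu$-labels when relativizing $\phi_L$ to a factor. Second, your remaining steps --- union, concatenation via a prefix-set guess, and state elimination in which iteration is applied only to languages of the form $\mscrunH{\HMSC}{s,s}$, with the usual split between finite and infinite accepting paths --- do match the paper (Lemmas~\ref{lem:emso-union-concatenation} and \ref{lem:state-elem}), so the gap is localized entirely in the iteration lemma.
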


The proof of Theorem~\ref{thm:HMSC-to-CFM} relies on Theorem~\ref{thm-EMSO-CFM}:
we only need to prove that any loop-connected HSMC can be translated into
an equivalent EMSO formula.
We first show in Section~\ref{sec:closure} that EMSO-definable languages
of cMSCs are closed under union and concatenation, and in the case where
all cMSCs in the language are connected, under iteration.
We then use these closure properties together with standard state-elimination 
techniques in Section~\ref{sec:realizability} to construct, from a 
loop-connected HMSC $\HMSC$, an EMSO formula $\phi$ such that
$L(\phi) = L(\HMSC)$.

\paragraph{The case of bounded channels.}

We will now compare Theorem~\ref{thm:HMSC-to-CFM} to \cite{GenestKM06},
which shows an analogous result for \emph{globally cooperative} HMSCs over finite cMSCs.
Globally cooperative HMSCs are based on the notion of \emph{weakly connected} cMSCs.\footnote{The property of being weakly loop-connected is referred to as \emph{loop-connected} in \cite{GenestKM06}.}

A cMSC $\msc = (E,\le,\msgrel,\lab,\mu)$ is
\emph{weakly connected} if it has a connected (undirected) communication graph.
Here, the communication graph of $\msc$ has $\{p \in \Procs \mid E_p \neq \emptyset\}$ as
set of nodes and it has an (undirected) edge between $p$ and $q$ if, for some $e,f \in E$,
we have
$\lab(e) = \send{p}{q}$ and $\lab(f) = \rec{q}{p}$ or
$\lab(e) = \send{q}{p}$ and $\lab(f) = \rec{p}{q}$.
Note that \emph{connected} implies \emph{weakly connected}, but not the other way around.
For example, the cMSC $M_5$ in Figure~\ref{fig:msc-examples}
is weakly connected, but not connected.

Analogously to the definition of loop-connected,
an HMSC is called \emph{weakly loop-connected} if, for all $s \in S$ and all 
finite paths $\rho$ from $s$ to $s$,
all cMSCs in $\mscrun{\rho}$ are weakly connected.

Let $\HMSC = (S, \iota, \Msg, R, F, F_\omega)$ be an HMSC such that
$F_\omega = \emptyset$.
We call $\HMSC$ \emph{safe}
if, for all accepting paths $\rho$ of $\HMSC$, $\mscrun{\rho}$ contains an MSC \cite{GenestKM06}.
We call $\HMSC$ \emph{globally cooperative}
if it is weakly loop-connected and safe.\footnote{The definition can be extended to
include infinite cMSCs, but it is more technical.}

\begin{example}
  The HMSC depicted in Figure~\ref{fig:hmsc-example} is not safe and, therefore, not
  globally cooperative.  In particular, there is no uniform bound on the channel capacity
  that would allow all behaviors to be scheduled in a way such that no channel exceeds the
  bound.  That is, there is no equivalent HMSC that is globally cooperative.
\end{example}

\begin{theorem}[Genest, Kuske, and Muscholl \cite{GenestKM06}]
\label{thm:gcHMSC-to-CFM}
  Every globally cooperative HMSC $\HMSC$ (with $F_\omega = \emptyset$)
  can be effectively translated into a CFM $\CFM$ such that $L(\HMSC) = L(\CFM)$.
\end{theorem}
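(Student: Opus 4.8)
The plan is to mirror the route used for Theorem~\ref{thm:HMSC-to-CFM} — reduce to Theorem~\ref{thm-EMSO-CFM} by building an equivalent EMSO sentence — but to replace the connected-iteration closure of Section~\ref{sec:closure}, which is unavailable here because globally cooperative HMSCs only require \emph{weakly} connected loops, by an argument that exploits \emph{existential boundedness}.

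The first and decisive step is to show that a globally cooperative $\HMSC$ defines an existentially $B$-bounded language for a computable bound $B$: every $M \in L(\HMSC)$ has a linearization in which no channel ever holds more than $B$ pending messages. Safety ensures that accepting paths really produce MSCs (nothing stays permanently unmatched), and weak loop-connectedness is exactly the hypothesis that forbids the ``disconnected counting'' of the four-process example from Section~\ref{sec:prel}: since every cycle passes through a weakly connected cMSC, any processes whose channels fill up along a loop must also communicate along that loop, so receives can be scheduled close behind their sends. I would bound $B$ in terms of the number of states of $\HMSC$ and the size of the cMSCs on its transitions, arguing by a pumping analysis that a larger channel overhang could be iterated into a path violating weak loop-connectedness or safety.

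Given the bound $B$, the second step translates $L(\HMSC)$ into EMSO. The gain from a $B$-bound is that the channel contents along a bounded linearization form a finite-state quantity, so a finite automaton can track both these contents and the control state of $\HMSC$ while reading such a linearization. I would therefore existentially quantify second-order variables that color each event with (a) a bounded schedule certificate pinning down a valid $B$-bounded linearization compatible with $\le$, and (b) the transition of $\HMSC$ together with the $\msgrel$-matching (checked against $\mu$) that produced the event. A first-order kernel then verifies local consistency — that consecutive guessed transitions agree on states, that the matching respects FIFO and the message labels used for concatenation, and that the path is accepting. Theorem~\ref{thm-EMSO-CFM} then yields the CFM.

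The main obstacle is twofold. Establishing a \emph{uniform, computable} bound $B$ bridges a process-level condition (connectedness of the communication graph) and an event-level property (bounded scheduling), which is the technical heart of~\cite{GenestKM06} and rests on the correspondence with star-connected rational expressions over Mazurkiewicz traces. Secondly, encoding a bounded linearization using only monadic (set-valued) quantification, rather than a binary order relation, is delicate; here it is cleaner to bypass EMSO and argue directly in trace theory, representing the $B$-bounded linearizations of $L(\HMSC)$ as a regular, commutation-closed word language and applying a Zielonka-style asynchronous-automaton construction to read off a CFM that nondeterministically guesses and verifies a bounded linearization — consistent with the inherent nondeterminism noted after Theorem~\ref{thm:HMSC-to-CFM}.
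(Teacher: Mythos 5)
First, note that the paper does not prove Theorem~\ref{thm:gcHMSC-to-CFM} at all: it is quoted from \cite{GenestKM06} for comparison with Theorem~\ref{thm:HMSC-to-CFM}, so your attempt has to be measured against the proof in that cited work. Your overall route --- establish existential boundedness, represent the language by its $B$-bounded linearizations, then build a CFM via a Zielonka-style asynchronous construction --- is indeed the strategy of \cite{GenestKM06}. But the way you distribute the work between the two hypotheses (safety and weak loop-connectedness) contains a genuine error, and one of your steps would fail as stated. You derive the bound $B$ from weak loop-connectedness (``receives can be scheduled close behind their sends'') and then treat the bound as the reason everything becomes finite-state. Both halves are misplaced. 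Existential boundedness comes essentially from \emph{safety}; weak loop-connectedness is neither needed for it nor sufficient to deliver what you use it for. The paper's own four-process example from Section~\ref{sec:prel} --- a single-state HMSC looping on an MSC with one complete message from $p_1$ to $p_2$ and one from $p_3$ to $p_4$ --- is safe (every path yields a complete MSC) and its language is existentially $1$-bounded (schedule each receive immediately after its send), yet it is not realizable by any CFM, and it is exactly the example that violates weak loop-connectedness. The same example breaks your second step: with $B=1$ the channel contents are trivially finite-state, but no finite automaton reading $1$-bounded linearizations can ``track the control state of $\HMSC$'', because the set of $1$-bounded linearizations of that language is not regular (it requires comparing the unbounded counts of the two message types). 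The actual role of weak loop-connectedness in \cite{GenestKM06} is precisely here: it is the star-connectedness-type hypothesis (in the sense of Mazurkiewicz trace theory / Ochmanski's theorem) that guarantees the set of $B$-bounded linearizations is a \emph{regular} word language, which is what the Zielonka-style construction consumes. Your pumping argument is therefore aimed at the wrong target, and your step~2 silently assumes the regularity that this hypothesis is supposed to provide.

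A secondary gap concerns your EMSO variant: coloring events with guessed transitions of $\HMSC$ and checking ``local consistency'' is not enough, since reconstructing an accepting path from such a coloring is exactly the hard part. The paper's own proof of Theorem~\ref{thm:HMSC-to-CFM} manages this only because strong (event-level) connectedness of loops makes the factorization first-order definable via the equivalence $\sim$; for weakly connected factors that definition breaks down, which is precisely why this paper's technique does not subsume Theorem~\ref{thm:gcHMSC-to-CFM} and why the question of translating globally cooperative HMSCs into loop-connected ones is left open (Problem~\ref{problem:gcHMSC}). Your fallback to trace theory is the right instinct, but at that point the proposal defers to the technical core of \cite{GenestKM06} rather than supplying it.
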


We leave it as an open problem whether, for every globally cooperative HMSC, there is an
equivalent loop-connected HMSC.

\subsection{Satisfiability of HMSCs}

Deciding whether the language of a given HMSC is nonempty turns out to be undecidable,
even under the structural restriction of loop-connected HMSCs.

\begin{theorem}\label{thm:undecidability}
  The following decision problem is undecidable: Given a loop-con\-nect\-ed HMSC $\HMSC$ (with at least
  two processes), do we have $L(\HMSC) \neq \emptyset$?
\end{theorem}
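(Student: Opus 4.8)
The plan is to reduce from Post's Correspondence Problem (PCP), which is undecidable. Given a PCP instance consisting of pairs $(u_i,v_i)_{1\le i\le k}$ of words over some alphabet $\Gamma$, I would build a loop-connected HMSC $\HMSC$ such that $L(\HMSC)\neq\emptyset$ if and only if there is a nonempty index sequence $i_1,\dots,i_n$ with $u_{i_1}\cdots u_{i_n}=v_{i_1}\cdots v_{i_n}$. The whole reduction rests on a single mechanism: since $L(\HMSC)$ retains only the \emph{complete} MSCs of $\mscrun{\rho}$ and channels are FIFO, every send event on a channel must be matched, in order, by a receive event carrying the \emph{same} message. Hence, if I arrange for one channel to carry $u_{i_1}\cdots u_{i_n}$ as its sequence of \emph{sent} messages and $v_{i_1}\cdots v_{i_n}$ as its sequence of \emph{received} messages, the existence of a complete MSC in the run forces these two words (and their lengths) to coincide, which is exactly the PCP condition. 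The symbols of $\Gamma$ are carried as messages from $\Msg$ on the otherwise unmatched send and receive events, so that the concatenation of gadgets controls which events ultimately become matched.

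A first attempt would let each traversal of a loop correspond to choosing one index $i$, contributing to the growing channel contents the block $u_i$ on the send side (unmatched $\send{}{}$ events on one process) and the block $v_i$ on the receive side (unmatched $\rec{}{}$ events on the other). A single accepting path then guesses the index sequence, and requiring at least one loop iteration enforces $n\ge 1$. An appealing feature of this layout is that a single HMSC transition contributes events to both processes at once, so the $u$-side and the $v$-side are automatically indexed by the \emph{same} sequence $i_1,\dots,i_n$, with no need to synchronize two independent guesses.

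The first delicate point is loop-connectedness: every cMSC produced by a cycle must be connected in the graph $(\mEvents,{\le}\cup{\le}^{-1})$. A gadget that merely adds unmatched sends on one process and unmatched receives on another is \emph{disconnected}, since within a single gadget those events are linked neither by $\procrel$ nor by $\msgrel$. I would repair this by threading one auxiliary \emph{complete} message through every gadget, in a channel disjoint from the one carrying the payload, so that the per-process chains of each gadget are joined into one component; as a chain on a single process is connected and the concatenation of connected cMSCs sharing a process stays connected, this makes every cycle generate only connected cMSCs. The auxiliary channel is matched entirely inside each gadget and so imposes no global constraint.

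I expect the main obstacle to be reconciling connectedness with the \emph{precise} FIFO matching the reduction needs. The concatenation semantics create new matches only from an earlier component to a strictly later one; so if the $\ell$-th send and the $\ell$-th receive both fall inside the \emph{same} per-index gadget (as happens for small $\ell$ in the naive layout), that send can never be matched and no complete MSC arises, making the reduction unsound by missing genuine solutions. I would therefore lay out the gadgets so that every send provably precedes its partner receive in a distinct component, e.g.\ by separating the generation of the two words into phases (all $u$-sends first, then all $v$-receives), or, at the cost of one extra process, by broadcasting the guessed word to two ``computing'' processes that emit $f$ and $g$ of it; either fix then reintroduces the task of keeping the two sides indexed by the \emph{same} sequence. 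Establishing both that a solution yields some member of $\MSCs$ and that no \emph{spurious} complete MSC exists for a non-solution path, while keeping $\HMSC$ loop-connected and over only two processes, is where the real work of the proof lies.
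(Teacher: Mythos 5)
Your overall strategy (reduce from PCP, using the fact that only \emph{complete} MSCs survive in $L(\HMSC)$ so that FIFO matching forces word equality) is the same as the paper's first proof, and you correctly diagnose the two pitfalls: connectedness of the loop gadgets, and the fact that concatenation can only match a send in an earlier component with a receive in a strictly later one. But the proposal stops exactly where the proof has to be done: no concrete HMSC is exhibited, and you yourself concede that the ``same index sequence'' problem is unresolved. This gap is real, and for your preferred two-process layout it is fatal rather than delicate: in the phase layout (all $u$-sends first, then all $v$-receives), the receive phase guesses its \emph{own} index sequence, so nonemptiness only expresses that $u_{i_1}\cdots u_{i_n} = v_{j_1}\cdots v_{j_m}$ for possibly different sequences $i$ and $j$. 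That is emptiness of the intersection of two rational languages, which is decidable, so no reduction of that shape can establish undecidability.

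The fix you mention and then discard --- broadcasting the guessed sequence to two computing processes via a third process --- does \emph{not} ``reintroduce'' the synchronization problem; it eliminates it, and it is precisely the paper's construction. Process $p$ guesses $u = a_1\cdots a_n$ once, and each gadget $M_{a}^{!}$ contains two unmatched sends carrying the \emph{same} letter $a$, one on channel $(p,q)$ and one on $(p,r)$, so the index sequence is transmitted identically to both sides by construction. FIFO matching on message labels then forces $q$'s phase to be exactly $M_{a_1}^{f}\cdots M_{a_n}^{f}$ (so channel $(q,p)$ carries $f(u)$) and $r$'s phase to be $M_{a_1}^{g}\cdots M_{a_n}^{g}$ (so $(r,p)$ carries $g(u)$); finally each gadget $M_{b}^{?}$ contains one unmatched receive from $q$ and one from $r$, both labeled $b$, so the single word $v$ guessed in the last phase must satisfy $v = f(u)$ and $v = g(u)$ simultaneously. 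Connectedness comes for free because every gadget has events on a single process (your auxiliary complete messages are unnecessary), and the resulting HMSC is even flat, hence loop-connected. Since the theorem only requires \emph{at least} two processes, three processes suffice. For exactly two processes the paper abandons PCP altogether and instead simulates a Turing machine by ping-ponging configurations between $p$ and $q$ --- that different reduction, not a repaired PCP encoding, is what you would need if you insist on two processes.
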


We give two proofs of this result: the first one, in Section~\ref{sec:sat-flat},
goes through a reduction from Post correspondence problem, and shows that 
satisfiability is undecidable even for \emph{flat} HMSCs, but uses
three processes. The second proof, in Section~\ref{sec:sat-2proc} is a 
reduction to the halting problem, and proves that satisfiability is undecidable
even with \emph{two} processes -- but it involves a non-flat HSMC.

\medskip

We also show in Section~\ref{sec:sat-sgm} that restricting to HMSCs where $\Msg$ is a singleton set results in an
undecidable satisfiability problem (though here we do not assume that the HMSC is loop-connected).
Note that this corresponds to the standard definition of HMSCs (cf., for example, \cite{GenestKM06}).

\begin{theorem}\label{thm:singleton}
  The following decision problem is undecidable: Given an HMSC
  $\HMSC$ such that $\Msg$ is a singleton set,
  do we have $L(\HMSC) \neq \emptyset$?
\end{theorem}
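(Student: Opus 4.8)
\emph{Approach.} The plan is to prove undecidability by reduction from an undecidable problem about sequential computations --- concretely, the halting problem for Turing machines (equivalently, the Post correspondence problem) --- reusing the overall strategy behind Theorem~\ref{thm:undecidability}. As in the reductions behind that result, I would let the finite control of $\HMSC$ nondeterministically \emph{guess} a computation (a sequence of configurations, or a sequence of PCP tiles), emitting one finite cMSC per step, and I would arrange the cMSCs so that the pending (unmatched) send and receive events can be completed into a full MSC --- i.e.\ $\mscrun{\rho} \cap \MSCs \ne \emptyset$ for the guessed accepting path $\rho$ --- \emph{if and only if} the guessed object is a valid accepting computation (resp.\ a PCP solution). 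Satisfiability of $\HMSC$ then coincides with the existence of such a computation, which is undecidable.

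\emph{Eliminating messages.} The essential new point is to make the construction work with a singleton $\Msg$. Messages are what make matching \emph{selective}: on a single channel, the $\msgrel$-matching across the path pairs the $i$-th pending send with the $i$-th pending receive, and equality of messages then forces the transmitted \emph{sequence} of tags on the sender side to equal the one expected on the receiver side. With $|\Msg| = 1$ this selectivity is lost: a single message only enforces, per channel, that the numbers of pending sends and receives agree (together with the prefix/FIFO constraint that no receive precedes its matching send). I would therefore re-encode the symbol information formerly carried by messages into the \emph{causal structure} of the cMSCs: each symbol comparison is realized by a fixed, bounded gadget whose shape --- the relative order of matched send/receive events threaded through a few auxiliary processes via the process-successor relation $\procrel$ and the matching relation $\msgrel$ --- forces the intended pairing, using additional channels (distinct ordered pairs of processes) to play the role of distinct ``types'' and using $\procrel$ to keep successive items aligned. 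Because these gadgets introduce loops whose cMSCs need not be connected, the resulting HMSC is in general not loop-connected, which is precisely why Theorem~\ref{thm:singleton} drops that assumption.

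\emph{Main obstacle.} The heart of the proof --- and the step I expect to be hardest --- is \emph{soundness} of this message-free encoding: proving that the \emph{only} way to match all pending events into a complete MSC corresponds to the intended computation, so that no ``spurious'' completion produces a behavior for a non-halting (resp.\ non-solution) guess. This is delicate because single-message FIFO matching is purely quantitative, and hence commutative: taken in isolation it yields only Petri-net/VASS-like (and thus decidable) constraints, so the reduction must genuinely inject non-commutativity through the global partial order. Concretely, I would need to show that the process-successor order together with cross-channel causal threading --- a pending receive that precedes, on some process, a pending send forces an order between the events matching them --- rigidly enforces the positional alignment of the compared symbols, and that reusing a bounded number of channels across the unbounded computation does not let the matching ``slip''. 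Once this rigidity is established, soundness and completeness follow routinely, and undecidability of $L(\HMSC) \ne \emptyset$ is immediate.
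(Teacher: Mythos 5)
There is a genuine gap, and it sits exactly where you placed your ``main obstacle.'' Your proposal correctly observes that with $|\Msg|=1$ the matching constraints become purely quantitative (per-channel equality of pending sends and receives, plus FIFO), but you then draw the wrong conclusion from this: you claim such constraints are ``Petri-net/VASS-like (and thus decidable),'' and therefore set out to re-inject symbolic information through causal gadgets threaded across auxiliary processes. That gadget construction is never given --- you explicitly defer the rigidity/soundness argument, which is the entire content of the proof --- and it is far from clear it can be carried out: FIFO gives no ordering \emph{across} distinct channels, so using channels as ``types'' requires receives and sends to be re-threaded through common processes, precisely the alignment property you leave unproved. As it stands, the proposal is a plan with a hole at its critical step, not a proof.

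What you are missing is that the quantitative setting is already undecidable, because the cMSC validity conditions give you a \emph{zero test} for free. The paper reduces from the nonemptiness problem for \emph{two-counter (Minsky) machines} over $\Procs=\{p_1,p_1',p_2,p_2'\}$: the HMSC copies the machine's transition structure; an increment of $c_1$ is a cMSC with a single unmatched send $p_1!p_1'$, a decrement a single unmatched receive $p_1'?p_1$, so the value of $c_1$ is the number of pending messages in channel $(p_1,p_1')$. The zero test is a cMSC consisting of one \emph{complete} (already matched) message edge from $p_1$ to $p_1'$: by the last condition of Definition~\ref{def:cmsc} (no unmatched event with the same label may precede a matched send, nor follow a matched receive) together with FIFO, this edge can appear in a valid (c)MSC of the run only if every earlier $p_1!p_1'$ has already been matched, i.e., only if $c_1=0$; and since $L(\HMSC)\subseteq\MSCs$, final acceptance forces both channels empty. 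Zero tests are exactly what separates Minsky machines from VASS, so your ``decidable in isolation'' intuition fails, and no encoding of symbols into the partial order is needed at all. Note also that this reduction, like yours, is not loop-connected --- which is why Theorem~\ref{thm:singleton} drops that hypothesis --- but it needs four processes and only elementary bookkeeping, whereas your route (even if completable) would still owe the reader the hardest lemma.
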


\section{Some Closure Properties of EMSO Languages}\label{sec:closure}

In this section, we fix a finite set $\Msg=\{m_1,\ldots,m_\ell\}$ of messages
and we let $n=|\Procs|$ be the number of processes.
We consider (languages of) cMSCs over $\Procs$
and $\Msg$.

\begin{lemma}\label{lem:emso-union-concatenation}
  EMSO languages of cMSCs are closed under union and concatenation.
\end{lemma}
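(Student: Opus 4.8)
The plan is to establish both closure properties by a direct construction on the EMSO formulas, exploiting the fact that EMSO is closed under existential second-order quantification and that the relevant structural relations of a concatenation can be recovered from those of the parts using an additional partition of the event set encoded by second-order variables.

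For \textbf{union}, the argument is essentially immediate: if $L_1 = L(\phi_1)$ and $L_2 = L(\phi_2)$ with $\phi_i = \exists \bar{X}_i.\,\psi_i$, I would first rename the quantified variables so that $\bar{X}_1$ and $\bar{X}_2$ are disjoint, and then take $\phi = \exists \bar{X}_1 \exists \bar{X}_2.\,(\psi_1 \vee \psi_2)$. Since a disjunction of first-order formulas is again first-order, this is a legitimate EMSO sentence, and a cMSC satisfies it iff it satisfies $\phi_1$ or $\phi_2$. The only mild subtlety is that $\psi_1$ and $\psi_2$ may both reference the same free second-order variables after renaming, but disjoint renaming removes this, so $L(\phi) = L_1 \cup L_2$.

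For \textbf{concatenation}, given $L_1 = L(\phi_1)$ with $\phi_1 = \exists\bar{X}_1.\,\psi_1$ and $L_2 = L(\phi_2)$ with $\phi_2 = \exists\bar{X}_2.\,\psi_2$, the goal is a formula $\phi$ with $L(\phi) = L_1 \circ L_2$. The idea is to guess, via a fresh second-order variable $Y$, the partition of the event set $E$ into the ``first part'' $E^1 = Y$ and the ``second part'' $E^2 = E \setminus Y$, and then assert that each part, equipped with its restricted relations, is a model of the corresponding formula, while also asserting the glueing conditions from the definition of $\circ$. Concretely, I would relativize $\psi_i$ to the respective part: each quantifier $\exists x$ becomes $\exists x.(x \in Y \wedge \ldots)$ for $i=1$ (and $x \notin Y$ for $i=2$), and the atomic relations $x \le_p y$, $x \procrel y$, $x \msgrel y$ inside $\psi_i$ must be reinterpreted as the relations of $\msc^i$ rather than of $\msc$. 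The core technical point is that these internal relations are EMSO-definable from the global ones once $Y$ is fixed: within $Y$, the process order $\le_p$ agrees with the global one, but one must take care that the message relation and the unmatched/labeled predicates $a(x)$ behave differently — an event matched in $\msc$ by a cross-part edge is unmatched in $\msc^1$ or $\msc^2$ and carries a message label $\mu^i$, which in $\phi_i$ is visible through the predicate $a(x)$. I would therefore also existentially guess the message assignment $\mu^1,\mu^2$ on the ``new'' matched edges (those in $\msgrel \setminus (\msgrel^1 \cup \msgrel^2)$) using further second-order variables $Z_a$ (one per message $a \in \Msg$), and add conjuncts enforcing the last three bullets of the concatenation definition: that $E^1$ events precede $E^2$ events in each $\le_p$, that every cross-part message edge links matching $\mu$-labels, and that the labeling is inherited.

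The step I expect to be the main obstacle is the faithful \emph{relativization of the message relation and of the unmatched-event predicates}. In the global cMSC $\msc$, an event $e \in E^1$ that is matched by a cross-part edge is \emph{matched}, so $\mu(e)$ is undefined and the predicate $a(e)$ is false for every $a$; but when viewed inside $\msc^1$ it must be \emph{unmatched} and satisfy $a(e)$ for the guessed $\mu^1(e)=a$. So the relativized versions of the predicates $a(x)$ appearing in $\psi_1,\psi_2$ cannot simply reuse the global predicate: they must consult the guessed variables $Z_a$ to decide whether $x$ is a cross-part endpoint and, if so, report the guessed message, and otherwise report ``matched'' exactly when $x$ has a same-part $\msgrel$-partner. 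Getting these bookkeeping conjuncts exactly right — so that the relativized $\psi_i$ sees precisely the structure of $\msc^i$ as produced by the definition of $\circ$, including the FIFO and the last ``no unmatchable event'' condition of Definition~\ref{def:cmsc} — is where the care lies; once this dictionary between the global predicates-plus-guesses and the local predicates is set up, correctness follows by checking that satisfying assignments of $\phi$ correspond bijectively to decompositions $\msc \in \msc^1 \circ \msc^2$ with $\msc^i \models \phi_i$.
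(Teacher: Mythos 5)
Your proposal is correct and follows essentially the same route as the paper: union via a disjunction under the second-order quantifiers, and concatenation via an existentially guessed prefix set (your $Y$, the paper's $W$), per-message sets recording the original labels of the newly matched cross-part edges (your $Z_a$, the paper's $Y_m$), and relativization of each first-order kernel to its part with the unmatched-message predicate reinterpreted as $m(y) \vee y \in Y_m$ --- precisely the subtlety you single out as the crux. The only difference of note is that the paper explicitly constrains the prefix to be a finite downward-closed set (expressible as the down-closure of at most $|\Procs|$ events), whereas you leave finiteness implicit; this is harmless under the lemma's hypotheses, since any cMSC model of $\phi_1$ lies in $L_1 \subseteq \fcMSCs$ and is therefore finite.
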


\begin{proof}
  Let $L_{1},L_{2}$ be languages of cMSCs defined by EMSO sentences 
  $\Phi_{1},\Phi_{2}$. Without loss of generality, we may assume that 
  $\Phi_{1},\Phi_{2}$ use the same \emph{set} variables: 
  $\Phi_{1}=\exists X_1 \ldots \exists X_k. \varphi_{1}$ and
  $\Phi_{2}=\exists X_1 \ldots \exists X_k. \varphi_{2}$ where $\varphi_{1},\varphi_{2}$ are 
  first-order formulas.
  
  Clearly, the language $L_{1}\cup L_{2}$ is defined by the formula
  $\exists X_1 \ldots \exists X_k. (\varphi_{1}\vee\varphi_{2})$.
  
  We show that the concatenation $L_{1}\circ L_{2}$ is defined by a 
  sentence $\Psi$ of the form
  $$
  \Psi = \exists W.\, 
  \exists Y_{m_1} \ldots \exists Y_{m_\ell}.\, 
  \exists X_1 \ldots \exists X_k.\, \psi \, .  
  $$
  The intuition is that variable $W$ identifies the prefix of the composed cMSC and
  variables $Y_{m_1}, \ldots, Y_{m_\ell}$ are used to guess which messages are "new" in
  the composition, and what the original message labels were.
  The first-order part $\psi$ of $\phi$ is a conjunction
  $\psi_{1}\wedge\psi_{2}\wedge\psi_{3}\wedge\psi_{4}$ checking several conditions.  The
  first requirement in $\psi$ makes sure that $W$ identifies a 
  \emph{finite} (nonempty) prefix of the cMSC:
  \begin{align*}
    \psi_{1} &= \exists x_{1}\ldots \exists x_{n} \forall y.\, y\in W 
    \Longleftrightarrow \bigvee_{1\leq i\leq n} y\leq x_{i} \,.
  \end{align*}

  The second condition says that (only) newly matched send events in $W$ and receive
  events not in $W$ can be in some $Y_m$, and the message must be unique.
  \begin{align*}
    \psi_{2} ={} &\forall x \forall y.\, x\msgrel y \implies 
    \Big(x,y\in W \vee x,y\notin W \vee \bigvee_{m\in\Msg} x,y\in Y_{m} \Big) 
    \\
    &\wedge\bigwedge_{m \in \Msg} \forall x.\, x \in Y_m \implies 
    \bigwedge_{m' \neq m} x \notin Y_{m'} \wedge \exists y.\, 
    \\
    &\hspace{20mm} (x \msgrel y \wedge x\in W \wedge y\notin W)
    \lor (y \msgrel x \wedge y\in W \wedge x\notin W)  
    \,.
  \end{align*}
  
  The third condition says that the prefix of the cMSC identified by $W$ satisfies
  $\Phi_{1}$.  To do so, we define a relativisation $\phiw \xi$ of a first-order
  formula $\xi$ to elements in $W$.  Message edges with endpoints in $Y_m$ should also be
  ignored, and the label $m$ added.  This is defined inductively as 
  follows: 
  \begin{align*}
    \phiw {(\exists y.\, \xi)} & = \exists y. \, y\in W \land \phiw \xi \\
    \phiw{m(y)} & = m(y) \lor y \in Y_m \qquad (m \in \Msg)
  \end{align*}
  The other cases are straightforward:
  \begin{align*}
    \phiw{(y \le z)} & = y \le z 
    &
    \phiw{(y \msgrel z)} &= y \msgrel z
    \\
    \phiw{p(y)} & = p(y) \qquad (p \in \Procs)
    &
    \phiw{(y\in X)} &= y\in X
    \\
    \phiw {(\xi_1 \lor \xi_2)} & = \phiw {\xi_1} \lor \phiw {\xi_2} 
    &
    \phiw {(\lnot \xi)} & = \lnot \phiw \xi \,.
  \end{align*}
  With this relativisation, the third formula is $\psi_{3}=\phiw{\varphi_{1}}$.  We define
  similarly the relativisation $\phinw{\xi}$ of a first-order formula $\xi$ to the
  suffix identified by the complement of $W$.  The last condition is 
  $\psi_{4}=\phinw{\varphi_{2}}$.
  \qed
\end{proof}

Let $L$ be a language of finite cMSCs. We say that $L$ is \emph{connected}
if all cMSCs in $L$ are connected.

\begin{lemma}\label{lem:emso-iteration-connected}
  Let $L$ be a language of \emph{finite} cMSCs which is definable in EMSO and
  \emph{connected}.  Then, $L^{+}$ and $L^\omega$ are definable in EMSO.
\end{lemma}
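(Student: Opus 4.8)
The plan is to express $L^+$ and $L^\omega$ as EMSO-definable languages by reusing the concatenation machinery from Lemma~\ref{lem:emso-union-concatenation}, but now I must guess an \emph{unbounded} decomposition of a given cMSC into consecutive blocks, each of which is a connected cMSC satisfying the EMSO sentence $\Phi$ defining $L$. The key idea is that a single existentially quantified set variable can encode such a partition: since all the blocks are \emph{connected}, I can use a ``coloring'' argument. Concretely, I would guess, for each event, a small amount of information identifying which block it belongs to. Because adjacent blocks must be distinguishable and because connectivity forces each block to be a single connected component once the inter-block structure is removed, a bounded number of colors suffices to recover the block boundaries — I expect two or three colors suffice, assigned so that consecutive blocks carry different colors, so that a block is recovered as a maximal connected monochromatic region under $({\le}\cup{\le}^{-1})$ minus the inter-block message edges.

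First I would introduce second-order variables that mark, for each event, its block color and, as in the concatenation lemma, variables $Y_m$ recording which messages are ``new'' matching edges created at block boundaries (i.e.\ edges in ${\msgrel}$ joining a send in one block to a receive in a strictly later block). Then I would write a first-order formula asserting: (i) the color assignment is \emph{proper} in the sense that any two events connected by a process-successor edge or by an ``internal'' message edge (one not marked as new) share a color, while any new message edge crosses a color boundary in the correct direction; (ii) each maximal monochromatic connected region is a legitimate block, meaning that relativizing $\varphi$ (the first-order matrix of $\Phi$) to that region makes it true; and (iii) the blocks are linearly ordered consistently with $\le$, with all events of an earlier block preceding all events of a later block on every shared process — this is exactly the condition imposed by the definition of the infinite product. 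The relativization of $\varphi$ to a block is the analogue of $\phiw{\varphi}$, except the quantifiers now range over a single color class rather than over $W$, and message labels must be restored from the $Y_m$ markers just as before.

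The crucial point making this work is \emph{connectedness}: because every cMSC in $L$ is connected, each block of a valid decomposition is a single connected component of the graph $(\mEvents, ({\le}\cup{\le}^{-1}))$ once the new (inter-block) message edges are deleted. This is what lets a bounded coloring recover the decomposition — without connectedness, a block could fall apart into several components that a finite coloring could not distinguish from pieces of neighboring blocks, and the partition would not be first-order recoverable. For the $\omega$-case I would additionally require the coloring to witness infinitely many blocks (e.g.\ every event lies in a block that is not maximal, equivalently there is no last block), and that the infinite path meets $F_\omega$ infinitely often; since we are defining the language of the iteration $L^\omega$ rather than of a full HMSC, this reduces to asserting that the decomposition has infinitely many blocks each satisfying $\varphi$, which is EMSO-expressible via the ``no maximal block'' clause together with the product-order condition.

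The main obstacle I anticipate is proving rigorously that a \emph{bounded} number of colors truly suffices to encode an arbitrary iteration, and that the first-order block-recovery formula is both sound and complete — that is, that every properly colored cMSC decomposes into connected blocks each in $L$, \emph{and} conversely that every element of $L^+$ (resp.\ $L^\omega$) admits such a coloring. The forward direction (a valid coloring yields an $L^+$ decomposition) rests on showing that monochromatic connected regions, cut along the guessed new edges, are exactly the intended blocks; the delicate case is ruling out ``accidental'' adjacencies where two same-colored consecutive blocks would merge, which is precisely why I need at least two alternating colors and why connectedness of each block is indispensable. I expect the number of colors needed to be independent of the input (a small constant, likely just two suffices since blocks are totally ordered), and verifying this independence is the heart of the argument.
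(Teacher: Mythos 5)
Your overall skeleton matches the paper's proof: existentially guess per-event information identifying the factors, guess sets $Y_m$ recording which message edges are newly created by the composition together with their original labels, relativize the first-order matrix of $\Phi$ to each factor, require the inter-factor order to be consistent, and separate $L^{+}$ from $L^{\omega}$ by (in)finiteness. However, two of your key steps fail as stated. First, the color count. No fixed palette of global per-block colors works, because a factor may have \emph{no events at all} on some process, so two blocks that are far apart in the composition can be adjacent on a process line. Already with $\Procs=\{p,q\}$, take blocks $B_1,B_2,B_3$ where all three have $p$-events but only $B_1,B_3$ have $q$-events: the block-adjacency graph is a triangle, so your two alternating colors merge $B_1$ and $B_3$ on process $q$; and for growing $|\Procs|$ one can realize arbitrarily large cliques (e.g.\ one process per pair of blocks), so no palette of size independent of $|\Procs|$ suffices. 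Your scheme is repairable ($|\Procs|+1$ colors always suffice, since each block has at most one earlier neighbor per process), but your guess that ``two colors suffice since blocks are totally ordered'' is exactly wrong for the reason you flagged as the delicate case. The paper sidesteps this entirely: its single set $W$ is an alternating $2$-coloring \emph{per process} -- an event carries only the bit for its own process, and the same factor may lie inside $W$ on one process and outside $W$ on another -- so no proper coloring of the block-adjacency graph is ever needed.

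Second, and more fundamentally, you treat ``$x$ and $y$ lie in the same maximal connected monochromatic region'' and ``the blocks are linearly ordered consistently'' as if they were first-order conditions, but reachability and acyclicity are not FO-expressible in general, and your proposal never explains how to circumvent this. This is precisely where connectedness of the cMSCs in $L$ enters the paper's proof, and it is the crux of the argument: contracting each (process, block) segment to a point, a connected factor has at most $n=|\Procs|$ segments, so any two events of the same factor are joined by a path of length at most $2n$ alternating between ``same process, same $W$-block'' steps ($\procsim$) and internal message edges ($\msgsim$). This yields an FO formula $x\sim y$ of fixed quantifier depth, made transitive by an extra axiom ($\psi_2$) forcing each block to be contiguous on every process; the same shortening argument bounds potential cycles in the inter-block order by $2n$, which makes acyclicity ($\psi_3$) FO-expressible as well. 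Without this bounded-diameter mechanism the ``first-order block-recovery formula'' your proof relies on simply does not exist; with it, your coloring variant would go through and essentially reproduce the paper's construction.
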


\begin{proof}
  Assume that $L$ is defined by the sentence 
  $\Phi = \exists X_1 \ldots \exists X_k.\, \phi$
  where $\phi$ is a first-order formula.
  We show that $L^{+}\cup L^{\omega}$ is defined by a sentence of the form
  \[
  \Psi =
  \exists W.\, 
  \exists Y_{m_1} \ldots \exists Y_{m_\ell}.\, 
  \exists X_1 \ldots \exists X_k.\, \psi \, .
  \]
  As in the proof of Lemma~\ref{lem:emso-union-concatenation}, variables $Y_{m_1}, \ldots,
  Y_{m_\ell}$ are used to guess which messages are "new" in the composition, and what the
  original message labels were. The main difference is with variable $W$ since, in an 
  iteration, we may have an arbitrary number of factors and not only a prefix and a suffix.
  The value of variable $W$ alternates on each process to identify factors of the
  composition.

  More precisely, consider a cMSC $\msc\in\msc_1\circ\msc_2\circ\cdots$ with $\msc_{i}\in
  L$ for all $i\geq 1$.  The interpretation of the set variables will be as follows:
  \begin{enumerate}[nosep]
    \item variables $X_1,\ldots,X_k$ are interpreted as the union of the interpretations
    witnessing the fact that $\msc_i\models\Phi$ for all $i$;

    \item variable $Y_m$ is interpreted as the set of send or receive events labeled with
    message $m$ in one of the $\msc_i$, and matched in $\msc$;
    
    \item variable $W$ is interpreted so that for all processes $p\in\Procs$, and all
    $i\geq 1$, either all $p$-events in $\msc_i$ are in $W$ or no $p$-events in
    $\msc_i$ are in $W$.  Moreover, if $\msc_i$ and $\msc_j$ ($i<j$) are consecutive
    factors with some $p$-events ($\msc_{k}$ has no $p$-events for $i<k<j$), then either
    all $p$-events in $\msc_i$ are in $W$ and no $p$-events in $\msc_j$ are in $W$, or the
    other way around.
  \end{enumerate}
  
  The first-order part $\psi$ of $\Psi$ is a conjunction 
  $\psi_{1}\wedge\psi_{2}\wedge\psi_{2}'\wedge\psi_{3}\wedge\psi_{4}$ checking several
  conditions.  In the following, we write
  $\Weq x y$ for $(x \in W \iff y \in W)$, and similarly 
  $\msgeq x y$ for the formula $\bigwedge_{m\in\Msg} (x \in Y_m \iff y \in Y_m)$.

  The first requirement in $\psi$ is a simple coherence condition on 
  $Y_{m_1},\ldots,Y_{m_\ell}$.
  Only matched send or receive events can be in some $Y_m$, and the message has
  to be unique.
  \[
  \psi_{1} =
  \bigwedge_{m \in \Msg} \forall x.\, x \in Y_m \implies \exists y.\, 
  (x \msgrel y \lor y \msgrel x) \land y \in Y_m \ \land
  \bigwedge_{m' \neq m} x \notin Y_{m'} \, .
  \]
  Notice that under the constraint $\psi_{1}$, a matched pair $e\msgrel f$ of send/receive
  events in $\msc$ are either both in the same $Y_m$ (the matching between them is added
  in the composition) or neither are in any $Y_m$ (they come from the same component
  $\msc_{i}$ and the matching was already there).

  \medskip

  To define the other conjuncts in $\psi$, we need to introduce a few more notations.  The
  main ingredient is a formula $x \sim y$ meant to be interpreted as $x$ and $y$ being
  part of the same factor in $L$.  This is where we use the fact that all cMSCs in
  $L$ are \emph{connected}.  Two events come from the same factor if ($\procsim$)
  they are on the same process and in the same $W$-block, or ($\msgsim$) they are related
  by a message which is not added in the composition (which can be checked using the
  $Y_{m}$'s), or they are in the transitive closure of these two basic 
  relations. We write ${\mprocle}=\bigcup_{p\in\Procs}{\leq}_{p}$ for the partial order 
  restricted to events on the same process.
  \begin{align*}
    x\procsim y &= (x\mprocle y \vee y\mprocle x) \land \forall z.
    \\
    &\qquad(x\mprocle z\mprocle y \vee y\mprocle z\mprocle x) \implies \Weq z x
    \\
    x\msgsim y &= (x\msgrel y \vee y\msgrel x) \wedge 
    \bigwedge_{m\in\Msg} \neg(x\in Y_{m} \vee y\in Y_{m})
    \\
    x\sim y &=
    \exists x_1,\ldots,x_{2n}.\, x=x_{1} \wedge y=x_{2n} \wedge
    \hspace{-1mm}\bigwedge_{1\leq i<2n}\hspace{-2mm}
    x_{i}\procsim x_{i+1} \vee x_{i}\msgsim x_{i+1} \,.
  \end{align*}
  It is easy to see that $\procsim$ is an \emph{equivalence} relation.  The relation
  $\sim$ is clearly \emph{reflexive} and \emph{symmetric}, but without further constraints,
  it need not be \emph{transitive} (notice that the path chosen between $x$ and $y$ is of
  length $2n$ where $n=|\Procs|$ is the number of processes).  To ensure transitivity, we
  request that, if two similar events $x\sim y$ are on the same process, then all events
  between $x$ and $y$ are in the same $W$-block, i.e., $x\procsim y$ (components are
  continuous on any given process).
  \begin{align*}
    \psi_{2} &=
    \forall x \forall y.\, (x\sim y \wedge x\mprocle y) \implies
    (\forall z.\, x\mprocle z\mprocle y \implies \Weq z x) \,.
  \end{align*}
  Under the constraint $\psi_{2}$, we can check that $\sim$ is transitive, hence it is an 
  equivalence relation.
  Indeed, assume that $x\sim y\sim z$ and let $x=x_{1},\ldots,x_{2n}=y,\ldots,x_{4n-1}=z$ 
  be a witnessing path, i.e., $x_{i}\procsim x_{i+1} \vee x_{i}\msgsim x_{i+1}$ for all 
  $1\leq i<4n-1$.  Using the reflexivity and transitivity of $\procsim$, we can get a
  (possibly shorter) path alternating between $\procsim$ and $\msgsim$:
  $$
  x=y_{1}\procsim y_{2}\msgsim y_{3}\procsim y_{4} \cdots 
  \msgsim y_{2n'-1}\procsim y_{2n'}=z\,.
  $$
  If $n'>n$ then among $y_{2},y_{4},\ldots,y_{2n+2}$ we find two events $y_{2i},y_{2j}$
  with $i<j$ on the same process.  We have $y_{2i}\sim y_{2j-1}$ and $y_{2i},y_{2j-1}$ on 
  the same process. By $\psi_{2}$ we get $y_{2i}\procsim y_{2j-1}$. Hence,
  we may shorten the path by removing
  $y_{2i},\ldots,y_{2j-1}$.  Repeating this shortening if needed, we end up with a path
  with $n'\leq n$.  In case $n'<n$, since $\procsim$ is reflexive, we may extend the path
  by stuttering the last event $z=y_{2n'}=y_{2n'+1}=\cdots=y_{2n}$. Therefore, $x\sim z$.
  
  We also require that the $Y_m$ variables precisely correspond to messages
  between different $\sim$-classes:
  \[
      \psi'_{2} =
    \paul{\forall x \forall y.\, x\msgrel y \implies 
    \Big( x\not\sim y \Longleftrightarrow
    \bigvee_{m\in\Msg} x\in Y_{m} \wedge y\in Y_{m} \Big) \,.}
  \]
  
  \medskip
  
  Now that the equivalence relation $\sim$ allows us to identify events in the same
  factor, we can state the next constraint: the process order and the message relation
  should induce a partial order on the set of all components.  This is because if
  $\msc\in\msc_1\circ\msc_2\circ\cdots$, process edges or message edges in $\msc$ which
  are not already in one of the components must go from some $\msc_i$ to some
  $\msc_j$ with $i<j$.  We let
  \[
  \bord x y = \neg(x\sim y) \wedge \exists x', y'.\, x \sim x' \land y \sim y' \land
  (x' \msgrel y' \lor x' \mproclt y') \, ,
  \]
  which leads us to the next conjunct in $\psi$: the relation $\bordn$ is acyclic.  Any
  $\bordn$-cycle going through the same process more than twice can be shortened,
  so acyclicity can be expressed as:
  \[
  \psi_{3} = 
  \bigwedge_{1 \le r \le 2n} \lnot \exists x_0, \ldots, x_r.\, x_0 = x_r \land
  \bigwedge_{0 \le i < r} \bord {x_i} {x_{i+1}} \, .
  \]

  Finally, we need to check that every sub-cMSC induced by $\sim$ is indeed in $L$.
  To do so, we define a relativisation $\phir \xi$ of $\xi$ to elements within
  the $\sim$-equivalence class of some event $x$. Message edges with endpoints
  in $Y_m$ should also be ignored, and the label $m$ added.
  This is defined inductively as follows:
  \begin{align*}
    \phir {(\exists y.\, \xi)} & = \exists y. \, y \sim x \land \phir \xi \\
    \phir{m(y)} & = m(y) \lor y \in Y_m \qquad (m \in \Msg)
  \end{align*}
  The other cases are straightforward:
  \begin{align*}
    \phir{(y \le z)} & = y \le z 
    &
    \phir {(y \msgrel z)} & = y \msgrel z
    \\
    \phir{p(y)} & = p(y) \qquad (p \in \Procs)
    &
    \phir{(y\in X)} & = y\in X
    \\
    \phir {(\xi_1 \lor \xi_2)} & = \phir {\xi_1} \lor \phir {\xi_2} 
    &
    \phir {(\lnot \xi)} & = \lnot \phir \xi \,.
  \end{align*}
  The last constraint is that every sub-cMSC induced by $\sim$, with the
  $X_1,\ldots,X_n$-labeling inherited from the full cMSC, satisfies $\phi$ (and therefore
  is in $L$):
  \[
  \psi_{4} = \forall x.\, \phir \phi \, .
  \]
  
  We turn to the proof of correctness, i.e., $L(\Psi)=L^{+}\cup L^{\omega}$.
  
  \paul{First, let $\msc\in\msc_1\circ\msc_2\circ\cdots$ with $\msc_i \in L$ for all 
  $i\geq 1$.
  We show that the cMSC $\msc$ satisfies $\psi$ when choosing an interpretation of the
  second-order variables as explained at the beginning of the proof.
  With the interpretation chosen for $Y_{m}$'s, it is clear that $\psi_{1}$ is satisfied.
  The crucial part is to verify that, with this interpretation, the relation $x\sim y$
  indeed coincide with ``$x,y$ are in the same factor'' of the composition.
  First, since MSCs in $L$ are \emph{connected}, it is easy to see that if $x,y$ are in 
  the same (connected) factor $M_{i}$ then 
  there is a path from $x$ to $y$ of length less than $2n$ using 
  the relation ${\procsim}\cup{\msgsim}$.  Hence $x\sim y$.
  Conversely, assume that $x\procsim y$. 
  From the property of $W$, we deduce that $x,y$ must be in the same factor.
  Now, assume that $x\msgsim y$. 
  From the property of the $Y_{m}$'s, we know that the message edge between $x$ and $y$ 
  is not added by the concatenation. Hence, $x,y$ are in the same factor.
  Therefore, $x\sim y$ implies that $x,y$ are in the same factor.
  From the property of $W$, it is now easy to check that $\psi_{2}$ is satisfied.
  The interpretation chosen for $Y_{m}$'s implies that $\psi'_{2}$ is also satisfied.
  By definition of the concatenation, if $x\msgrel y \lor x\mproclt y$ where $x$ is in 
  $M_{i}$ and $y$ is in $M_{j}$ with $i\neq j$ then $i<j$. We deduce that $\bordn$ is 
  acyclic and $\psi_{3}$ is satisfied.
  Finally, given the interpretation of the $X_{i}$'s and $Y_{m}$'s, the formula 
  $\psi_{4}$ is also satisfied.
  This proves that $L^{+}\cup L^{\omega}\subseteq L(\Psi)$.}
  
  \paul{Conversely, assume that $\msc=(E,\leq,\msgrel,\lambda,\mu)$ satisfies $\Psi$.  
  Consider an interpretation of the set variables $W$, $Y_{m}$'s and $X_{i}$'s such that
  $\psi$ is satisfied.  By $\psi_{2}$, $\sim$ is an equivalence relation on the set of
  events.  Hence, we find sub-cMSCs $\msc_{1},\msc_{2},\ldots$ of $\msc$ induced by the
  equivalence relation $\sim$ and the $Y_{m}$'s.  More precisely, if $e$ is an event in
  $\msc$, then we let $\msc^{e}=([e],\leq,\msgrel,\lambda,\mu')$ be the cMSC with
  $[e]=\{f\mid f\sim e\}$ as set of events, $\leq,\msgrel,\lambda$ restricted to $[e]$, 
  and $f\in\dom(\mu')$ if either $f\in[e]\cap\dom(\mu)$ (in which case $\mu'(f)=\mu(f)$) 
  or if $f\in[e]\cap Y_{m}$ for some $m\in\Msg$ (in which case $\mu'(f)=m$).
  By $\psi_{1}$ these cases are exclusive, hence $\mu'$ is well-defined.
  Notice that by $\psi'_{2}$, if $f\msgrel g$ with $f,g\in[e]$ then
  $f,g\notin\dom(\mu')$.
  By $\psi_{3}$ these sub-cMSCs are
  partially ordered by $\bordn$.  Hence, without loss of generality we may assume that if
  $\bord{\msc_{i}}{\msc_{j}}$ then $i<j$.  By $\psi_{4}$, we have $\msc_{i}\in L$ for
  all $i\geq 1$.  One can then verify
  that $\msc\in\msc_1\circ\msc_2\circ\cdots$.}
  
  Finally, let $\mathsf{finite}=\exists x_{1}\cdots\exists x_{n}\forall y. 
  \bigvee_{i=1}^{n}y\leq x_{i}$ be a sentence defining finite nonempty cMSCs.
  We see that $L^{+}$ and $L^{\omega}$ are respectively defined by 
  $\Psi\wedge\mathsf{finite}$ and $\Psi\wedge\neg\mathsf{finite}$.
  \qed
\end{proof}

\section{Realizability of HMSCs}
\label{sec:realizability}

We now use the closure properties established in the previous section to prove
Theorem~\ref{thm:HMSC-to-CFM}.
This is done, essentially, by applying a standard automata-to-regular-expressions
translation to HMSCs, and observing that the loop-connected assumption
guarantees that iteration is only applied to connected languages.

To that end, we introduce \emph{generalized HMSCs}, which are defined as 
HMSCs except that transitions are labeled by \emph{languages} of finite cMSCs.
More precisely, a generalized HMSC over $\Procs$ is a tuple
$\HMSC = (S, \iota, \Msg, R, F, F_\omega)$ where $S$, $\iota$, $\Msg$, $F$ and
$F_\omega$ are as before, but with a transition function
$R \colon S \times S \to 2^{\fcMSCs}$.

A finite path in $\HMSC$ is a sequence of states
$\rho = s_0 s_1 \ldots s_n \in S^+$ with $n\geq1$.
The set $\mscrun{\rho}$ is defined as the union of all concatenations
$M_1 \circ \cdots \circ M_n$ with $M_i \in R(s_{i-1},s_i)$ for all $1 \le i \le n$.
Infinite paths and associated sets $\mscrun{\rho}$ are defined similarly.
We also write
$\mscrunH \HMSC {s,s'} = \bigcup \{\mscrun{\rho} \mid \rho \textup{ is a finite path of } 
\HMSC \textup{ from } s \textup{ to } s'\}$.
The language $L(\HMSC)$ of a generalized MSC is then defined, as before, as
$L(\HMSC) = \bigcup \{\mscrun{\rho} \mid \rho \textup{ is an accepting path of } \HMSC\}
\cap \MSCs$.
 
A generalized HMSC is called \emph{loop-connected} if for all $s \in S$,
all cMSCs in $\mscrunH \HMSC {s,s}$ are connected.
It is called EMSO-definable if for all states $s$ and~$s'$, there is an EMSO
formula $\phi_{s,s'}$ such that  $R(s,s') = L(\phi_{s,s'})$.

Note that any (loop-connected) HMSC can be seen as a generalized
(loop-connected) HMSC where  for all pairs of states $(s,s')$,
$R(s,s')$ is a finite language.

\smallskip

We can now describe a state elimination procedure for (generalized) HMSCs,
analogous to state elimination for standard automata.
Given a generalized HMSC $\HMSC = (S, \iota, \Msg, R, F, F_\omega)$ and
a state $s \in S \setminus (F \cup F_\omega \cup \{\iota\})$, we define $\HMSC_s = (S \setminus \{s\}, \iota, \Msg, R_s, F, F_\omega)$ as
\[
	R_s(s_1,s_2) = R(s_1,s_2) \cup \big( R(s_1,s) \circ R(s,s_2) \big) \cup 
	\big(R(s_1,s) \circ R(s,s)^+ \circ R(s,s_2)\big) \, .
\]

\begin{lemma}\label{lem:state-elem}
	We have $L(\HMSC) = L(\HMSC_s)$.
	In addition, if $\HMSC$ is loop-connected and EMSO definable,
	then $\HMSC_s$ is also loop-connected and EMSO definable.
\end{lemma}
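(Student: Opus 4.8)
The plan is to treat the two assertions separately, since the language equality $L(\HMSC) = L(\HMSC_s)$ is a purely combinatorial fact about paths, whereas the preservation of loop-connectedness and EMSO-definability is where the earlier closure lemmas enter.

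For $L(\HMSC) = L(\HMSC_s)$, I would relate accepting paths of the two generalized HMSCs by collapsing, respectively unfolding, the ``$s$-islands''. Since $s \notin F \cup F_\omega \cup \{\iota\}$, in any accepting path of $\HMSC$ the state $s$ is neither the first nor (for finite paths) the last state, and (for infinite paths) it cannot be visited cofinally, as the $F_\omega$-states, all distinct from $s$, must occur infinitely often; hence every maximal block of consecutive $s$'s is finite and is both preceded and followed by a state in $S \setminus \{s\}$. Replacing each such block $t\, s\, s \cdots s\, t'$ (with $k \ge 1$ occurrences of $s$) by the single edge $t \to t'$ of $\HMSC_s$ yields an accepting path $\rho'$ of $\HMSC_s$; conversely, each edge $t \to t'$ of $\HMSC_s$ unfolds into one of three kinds of $\HMSC$-segments, according to which of the three summands in $R_s(t,t')$ a chosen cMSC comes from. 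Because concatenation of cMSC languages is associative and monotone, the labels chosen along a block, drawn from $R(t,s) \circ R(s,s)^{k-1} \circ R(s,t') \subseteq R_s(t,t')$, realize exactly the cMSCs of the collapsed edge, so that $\mscrun{\rho} \subseteq \mscrun{\rho'}$ in one direction and $\mscrun{\rho'} \subseteq \bigcup_\rho \mscrun{\rho}$ (over the unfoldings $\rho$) in the other. Intersecting with $\MSCs$ then gives $L(\HMSC) = L(\HMSC_s)$.

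For the second assertion, the decisive observation is that the only iteration appearing in the definition of $R_s$ is $R(s,s)^+$, i.e.\ iteration over the self-loop language at $s$. Since $R(s,s) \subseteq \mscrunH{\HMSC}{s,s}$ and $\HMSC$ is loop-connected, every cMSC in $R(s,s)$ is connected, so $R(s,s)$ is a connected language of finite cMSCs and Lemma~\ref{lem:emso-iteration-connected} applies, yielding an EMSO sentence for $R(s,s)^+$. Assuming that each of $R(s_1,s_2)$, $R(s_1,s)$, $R(s,s_2)$ is EMSO-definable, Lemma~\ref{lem:emso-union-concatenation} then makes each of the three summands of $R_s(s_1,s_2)$ EMSO-definable (all the languages involved being languages of finite cMSCs, so that concatenation closure applies), and finally $R_s(s_1,s_2)$ itself by closure under union. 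This is exactly why elimination is restricted to states $s \notin \{\iota\} \cup F \cup F_\omega$ and why the loop-connected hypothesis is needed: it guarantees that iteration is never applied to a non-connected language.

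It remains to check that $\HMSC_s$ is again loop-connected, for which I would establish the containment $\mscrunH{\HMSC_s}{s',s'} \subseteq \mscrunH{\HMSC}{s',s'}$ for every $s' \in S \setminus \{s\}$: any loop from $s'$ to $s'$ in $\HMSC_s$ unfolds, exactly as above, into a loop from $s'$ to $s'$ in $\HMSC$, and by associativity of concatenation every cMSC realizable along the former is realizable along the latter. Since $\HMSC$ is loop-connected, all cMSCs in $\mscrunH{\HMSC}{s',s'}$ are connected, hence so are all cMSCs in $\mscrunH{\HMSC_s}{s',s'}$, which is the required loop-connectedness of $\HMSC_s$. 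The main obstacle I anticipate is not any single step but the bookkeeping that ties the combinatorial collapse/unfolding of paths to the algebraic manipulation of the label languages; the one genuinely load-bearing idea is that loop-connectedness confines iteration to the connected self-loop language $R(s,s)$, so that Lemma~\ref{lem:emso-iteration-connected} --- the only closure result requiring connectedness --- remains applicable throughout the elimination.
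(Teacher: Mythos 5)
Your proposal is correct and takes essentially the same approach as the paper, which compresses your path collapse/unfolding argument into the single claim (left as ``easy to verify'') that $\mscrunH{\HMSC}{s_1,s_2} = \mscrunH{\HMSC_s}{s_1,s_2}$ for all $s_1,s_2 \in S\setminus\{s\}$, from which both the language equality and the loop-connectedness of $\HMSC_s$ follow at once. The EMSO part is identical: your key observation that loop-connectedness makes $R(s,s)$ a connected language of finite cMSCs, so that Lemma~\ref{lem:emso-iteration-connected} applies to $R(s,s)^+$ while Lemma~\ref{lem:emso-union-concatenation} handles the unions and concatenations, is exactly the paper's argument.
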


\begin{proof}
	It is easy to verify that for all states $s_1,s_2 \in S \setminus \{s\}$,
	$\mscrunH \HMSC {s_1,s_2} = \mscrunH {\HMSC_s} {s_1,s_2}$.
	This implies $L(\HMSC) = L(\HMSC_s)$, as well as the fact that
	if $\HMSC$ is loop-connected, then so is $\HMSC_s$.
	
	Together with Lemmas~\ref{lem:emso-union-concatenation} and
	\ref{lem:emso-iteration-connected}, this also implies that if $\HMSC$ is
	both loop-connected and EMSO-definable, then $\HMSC_s$ is as well. \qed
\end{proof}

To prove Theorem~\ref{thm:HMSC-to-CFM}, we only need to show that every
loop-connected HMSC $\HMSC$ can be translated into an EMSO formula
$\phi$ with the same language.
Without loss of generality, we may assume that $\HMSC = (S, \iota, \Msg, R, F, F_\omega)$ is such that $F \cup F_\omega$ is a singleton (as the language of
$\HMSC$ is the finite union of all the languages obtained by keeping only
one accepting state), and that $\iota$ has no incoming transition.

Assume that $F \cup F_\omega = \{s\}$. We successively eliminate all states in
$S \setminus \{\iota,s\}$.
By Lemma~\ref{lem:state-elem}, every intermediate generalized HMSC thus obtained
is loop-connected and EMSO-definable, and such that
$R(s',\iota) = \emptyset$ for all states $s'$, as $\iota$ has no incoming transition in $\HMSC$.
Let $\HMSC' = (\{\iota,s\} , \iota, \Msg, R', F, F_\omega)$
be the HSMC obtained at the end. We then have
\begin{align*}
	L(\HMSC') & = R'(\iota,s) \cup \left(R'(\iota,s) \circ R'(s,s)^+\right) && \text{if } s \in F \\
	L(\HMSC') & = R'(\iota,s) \circ R'(s,s)^\omega && \text{if } s \in F_\omega \, .
\end{align*}
Since we have EMSO formulas for $R'(\iota,s)$ and $R'(s,s)$, 
and $R'(s,s)$ is connected, we can construct an EMSO formula $\phi$ 
such that
\[
	L(\phi) = L(\HMSC') = L(\HMSC) \, .
\]

By Theorem~\ref{thm-EMSO-CFM}, the formula $\phi$ can in turn be translated
into an equivalent CFM.

%%%%%%%%%%
\section{Satisfiability of HMSCs}
\label{sec:satisfiability}

In this section, we present the proofs related to satisfiability results.  Specifically,
for the satisfiability problem of loop-connected HMSCs, we provide two proofs of
undecidability.
In the first one, we reduce the Post correspondence problem to the satisfiability problem of
a loop-connected \emph{flat} HMSC over three processes. By \emph{flat} we mean that it
contains only self-loops: if there is a path from $s$ to $s'$ and a path from
$s'$ to $s$ then $s=s'$.
In the second proof, we reduce the halting problem of a Turing machine on the empty word 
to the satisfiability problem of a loop-connected HMSC over two processes.
Finally, we also reduce the nonemptiness problem of a two-counter machines to the 
satisfiability problem of HMSCs over four processes and a singleton message set.

\subsection{Undecidability for Loop-Connected Flat HMSCs}\label{sec:sat-flat}

We start with the reduction from the PCP problem. Consider two finite alphabets $A$ and 
$B$ and two morphisms $f,g\colon A^{*}\to B^{*}$. The PCP problem asks whether there is a 
nonempty word $u\in A^{+}$ such that $f(u)=g(u)$. Without loss of generality, we assume 
that $A\cap B=\emptyset$ and $f(A),g(A)\subseteq B^{+}$.

\begin{figure}[tb]
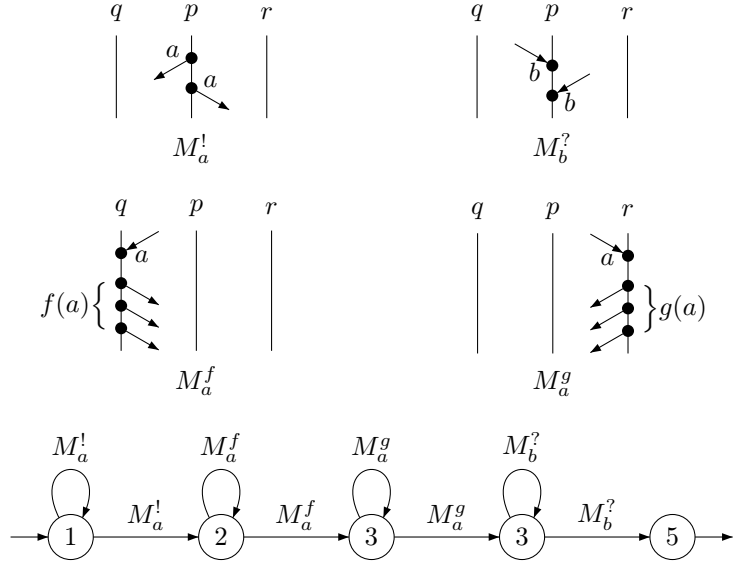

  \centering
  \gusepicture{Ma}\hfil\gusepicture{Mb}
  \\[3ex]
  \gusepicture{Maf}~~~~\hfil~~~~\gusepicture{Mag}
  \\[3ex]
  \gusepicture{flat-HMSC-PCP}
  \caption{Connected and flat HMSC used for the reduction from the PCP problem.}
  \label{fig:flat-hmsc-pcp}
\end{figure}

There are three processes $\Procs=\{p,q,r\}$ and we will only send messages between $p$ 
and $q$ and between $p$ and $r$ (no messages are exchanged between $q$ and $r$).
The set of messages is $\Msg=A\cup B$.
We first define the cMSCs used to label the transitions of the HMSC, see
Figure~\ref{fig:flat-hmsc-pcp}.
For each letter $a\in A$, we define the cMSC $M_{a}^{!}$ which consists only of two 
events $e<f$ on process $p$ which are unmatched sends labeled 
$\lambda(e)=\send{p}{q}$, $\lambda(f)=\send{p}{r}$ and $\mu(e)=\mu(f)=a$.
Assume that $f(a)=b_{1}b_{2}\cdots b_{k}$ with $b_{i}\in B$ for $1\leq i\leq k$.
We also define the cMSC $M_{a}^{f}$ which consists of $k+1$ events $e<f_{1}<\cdots<f_{k}$
on process $q$ which are labeled 
$\lambda(e)=\rec{q}{p}$, $\mu(e)=a$ and 
$\lambda(f_{i})=\send{q}{p}$, $\mu(f_{i})=b_{i}$ for $1\leq i\leq k$.
We define similarly the cMSC $M_{a}^{g}$ which consists of $1+|g(a)|$ events on process 
$r$, first receiving the message $a$ sent by $p$ and then sending to $p$ the sequence 
$g(a)$.
Finally, for each letter $b\in B$ we have a cMSC $M_{b}^{?}$ with two unmatched receives 
on process $p$ with message $b$.
All these cMSMs are \emph{connected} since in each of them only one process has events.

Now, we define a \emph{flat} and connected HMSC $\HMSC$ corresponding to the following
rational expression (see Figure~\ref{fig:flat-hmsc-pcp}).
$$
\Big( \sum_{a\in A} M_{a}^{!} \Big)^{+} \cdot
\Big( \sum_{a\in A} M_{a}^{f} \Big)^{+} \cdot
\Big( \sum_{a\in A} M_{a}^{g} \Big)^{+} \cdot
\Big( \sum_{b\in B} M_{b}^{?} \Big)^{+} \,.
$$
We show that the PCP problem has a solution if and only if the language of $\HMSC$ is 
nonempty.

Let $u=a_{1}a_{2}\cdots a_{n}\in A^{+}$ with $a_{i}\in A$ for $1\leq i\leq n$ be a
solution of the PCP problem, i.e., $f(u)=g(u)=b_{1}b_{2}\cdots b_{m}$ with $b_{i}\in B$
for $1\leq i\leq m$.  Consider the accepting path $\rho$ of $\HMSC$ reading the sequence
$$
M_{a_{1}}^{!}M_{a_{2}}^{!}\cdots M_{a_{n}}^{!}
M_{a_{1}}^{f}M_{a_{2}}^{f}\cdots M_{a_{n}}^{f}
M_{a_{1}}^{g}M_{a_{2}}^{g}\cdots M_{a_{n}}^{g}
M_{b_{1}}^{?}M_{b_{2}}^{?}\cdots M_{b_{m}}^{?} \,.
$$
It is easy to see that $\mscrun{\rho}\cap\MSCs$ is nonempty and contains a unique 
MSC.
Hence the flat and connected HMSC $\HMSC$ has a 
nonemtpy language.

Conversely, assume that $L(\HMSC)$ is nonempty.  Consider an accepting path $\rho$ of
$\HMSC$ such that $\mscrun{\rho}\cap\MSCs\neq\emptyset$.  
Fix some $M\in\mscrun{\rho}\cap\MSCs$.
Let $\sigma=\sigma_{1}\sigma_{2}\sigma_{3}\sigma_{4}$ be the sequence of MSCs read by 
$\rho$ with $\sigma_{1}\in\Big( \sum_{a\in A} M_{a}^{!} \Big)^{+}$,
$\sigma_{2}\in\Big( \sum_{a\in A} M_{a}^{f} \Big)^{+}$,
$\sigma_{3}\in\Big( \sum_{a\in A} M_{a}^{g} \Big)^{+}$, and
$\sigma_{4}\in\Big( \sum_{b\in B} M_{b}^{?} \Big)^{+}$.
Let $u=a_{1}a_{2}\cdots a_{n}\in A^{+}$ be such that
$\sigma_{1}=M_{a_{1}}^{!}M_{a_{2}}^{!}\cdots M_{a_{n}}^{!}$ and let 
$v=b_{1}b_{2}\cdots b_{m}\in B^{+}$ be such that
$\sigma_{4}=M_{b_{1}}^{?}M_{b_{2}}^{?}\cdots M_{b_{m}}^{?}$.
Since all messages sent from $p$ to $q$ must be matched in $M$, we get
$\sigma_{2}=M_{a_{1}}^{f}M_{a_{2}}^{f}\cdots M_{a_{n}}^{f}$, and since all messages sent
back from $q$ to $p$ must be matched in $M$, we get $v=f(u)$.  
Similarly, $\sigma_{3}=M_{a_{1}}^{g}M_{a_{2}}^{g}\cdots M_{a_{n}}^{g}$ and $v=g(u)$.  
Therefore, $u$ is a solution of the PCP problem.

\subsection{Undecidability for Loop-Connected HMSCs over Two Processes}\label{sec:sat-2proc}

Next, we move to the reduction from the halting problem of a Turing machine $\mathcal{M}$
on the empty word to the nonemptiness problem of a loop-connected HMSC $\HMSC$ over
two processes $p$ and $q$.  Intuitively, process $p$ first guesses an initial
configuration $\leftend s_{0} \blank^N$ of $\mathcal{M}$ over the empty word.  Here
$\leftend$ is the left endmarker, $s_{0}$ is the initial state of $\mathcal{M}$ and
$\blank$ is the blank tape symbol.  When gessing this initial configuration, process $p$
uses sufficiently many ($N$) blank symbols for the space needed by the computation until
the halting configuration with state $s_{h}$ is reached.  Process $p$ sends its
configuration $C$ to process $q$, which sends back to $p$ the successor $C'$ of $C$ in the
computation of $\mathcal{M}$.  This is repeated until the halting state is seen by process
$p$, in which case the HMSC $\HMSC$ accepts.  An MSC simulating a run of a Turing machine
is depicted in Figure~\ref{fig:reduction-Turing}.

\begin{figure}[thb]
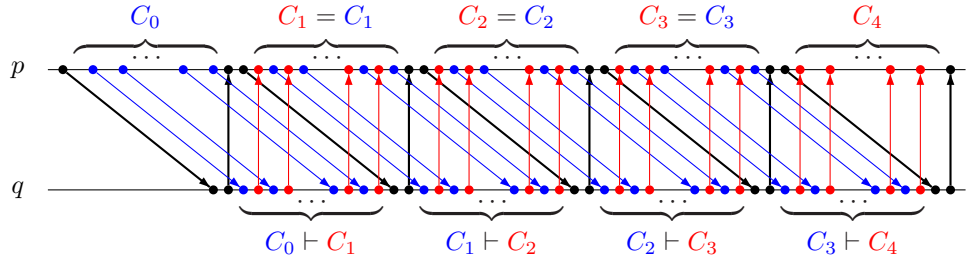

  \centering
  \gusepicture{MrunTM} 
  \caption{Simulation of a run of a Turing machine with an MSC.}
  \label{fig:reduction-Turing}
\end{figure}

Formally, let $S$ be the set of states of the Turing machine $\mathcal{M}$ and $\Gamma$ 
be the set of tape symbols. We have $s_{0},s_{h}\in S$ and $\leftend,\blank\in\Gamma$.
The transition function of $\mathcal{M}$ is given by a subset 
$\Delta\subseteq(\Gamma\times S\times\Gamma)\times(S\cup\Gamma)^{3}$. For instance, 
$\delta=(\alpha s \beta, s' \alpha \gamma)\in\Delta$ means that in state $s$ when reading 
$\beta$ the Turing machine $\mathcal{M}$ goes to state $s'$, replaces $\beta$ with $\gamma$ 
and moves left (a left move is not possible if $\alpha=\leftend$ is the left endmarker).

\begin{figure}[thb]
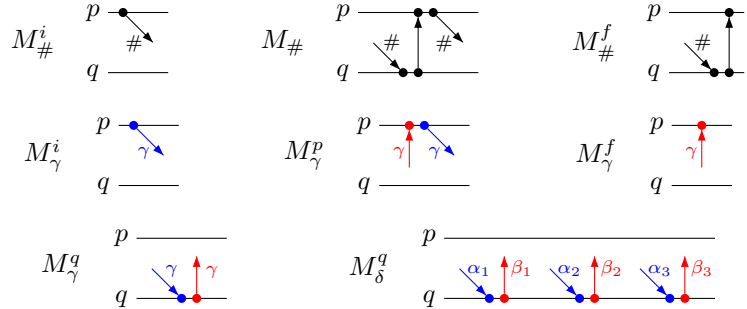

  \centering
  \gusepicture{Msharpi}\hfil\gusepicture{Msharp}\hfil\gusepicture{Msharpf}
  \\[3ex]
  \gusepicture{Mgammai}\hfil\gusepicture{Mgammap}\hfil\gusepicture{Mgammaf}
  \\[3ex]
  \gusepicture{Mgammaq}\hfil\gusepicture{Mdelta}
  \caption{Basic cMSCs for the simulation of a Turing machine.  Here, $\#\notin
  S\cup\Gamma$ is a new (marker) symbol, $\gamma\in S\cup\Gamma$ and
  $\delta=(\alpha_{1}\alpha_{2}\alpha_{3},\beta_{1}\beta_{2}\beta_{3})\in\Delta$.}
  \label{fig:cmsc-reduction-Turing}
\end{figure}

The cMSCs used by the HMSC $\HMSC$ are depicted in 
Figure~\ref{fig:cmsc-reduction-Turing}. Notice that these cMSCs are all \emph{connected}.
We describe the HMSC $\HMSC$ with the following rational expression:
\begin{align*}
  \mathsf{Init} &= M_{\#}^{i} \, M_{\leftend}^{i} \, M_{s_{0}}^{i} \, 
  \big( M_{\blank}^{i} \big)^{+} 
  &
  \mathsf{Succ} &= M_{\#} \cdot \bigg( \sum_{\gamma\in\Gamma} M^{q}_{\gamma} \bigg)^{*} 
  \cdot \bigg( \sum_{\delta\in\Delta} M^{q}_{\delta} \bigg)
  \cdot \bigg( \sum_{\gamma\in\Gamma} M^{q}_{\gamma} \bigg)^{*} 
  \\
  \mathsf{Copy} &= \bigg( \sum_{\gamma\in S\cup\Gamma} M^{p}_{\gamma} \bigg)^{+} 
  &
  \mathsf{Halt} &= \bigg( \sum_{\gamma\in\Gamma} M^{f}_{\gamma} \bigg)^{+} 
  \cdot M_{s_{h}}^{f} 
  \cdot \bigg( \sum_{\gamma\in\Gamma} M^{f}_{\gamma} \bigg)^{*} 
  \cdot M_{\#}^{f} 
  \\
  &&
  \HMSC &= \mathsf{Init} \cdot \big( \mathsf{Succ} \cdot \mathsf{Copy} \big)^{*}
  \cdot \mathsf{Succ} \cdot \mathsf{Halt} % \,.
\end{align*}
Notice that the HMSC $\HMSC$ is loop-connected.  In particular, the loop corresponding to
$\big(\mathsf{Succ}\cdot\mathsf{Copy}\big)^{*}$ is connected since it starts with $M_{\#}$
which connects the two processes.  The MSCs accepted by $\HMSC$ are of the form of
Figure~\ref{fig:reduction-Turing} and correspond to halting computations of the Turing
machine $\mathcal{M}$ on the empty word, i.e., starting from an initial configuration in
$\leftend s_{0} \blank^{+}$.  Therefore, the computation of $\mathcal{M}$ on the empty
word halts if and only if $L(\HMSC)\neq\emptyset$.
This completes the reduction and the undecidability proof for two processes.

\subsection{Undecidability for HMSCs With a Singleton Message Alphabet}\label{sec:sat-sgm}

We reduce nonemptiness (with counter values 0) in two-counter machines to
the corresponding satisfiability problem for HMSCs over $\Procs = \{p_1, p_1', p_2, p_2'\}$
with the singleton message alphabet $\Msg = \{a\}$. As we deal with a singleton set,
we henceforth omit any mention of the message.
Note that our reduction does not allow us to restrict to loop-connected HMSCs.

Just like an HMSC, a two-counter machine $\mathcal{M}$ has a finite state space with
a distinguished initial state and a set of final states (corresponding to $F$
in HMSCs). Moreover, it has two counters, $c_1$ and $c_2$,
whose values range over $\mathbb{N}$. In a transition of $\mathcal{M}$,
any counter can be incremented by one, be decremented by one (provided its
current value is positive), or be tested for zero (the transition can only be taken if the current counter value is 0). The nonemptiness problem asks whether we can reach a final states
when both counters have value 0.

We construct an HMSC $\HMSC$ such that a final state is reachable in
$\mathcal{M}$ iff the language of $\HMSC$ is nonempty.
The idea is that processes $p_1$ and $p_1'$ 
together simulate counter $c_1$, and $p_2$ and $p_2'$ simulate counter $c_2$.
The state-transition structure of $\HMSC$ is exactly the same as that of $\mathcal{M}$.
However, instead of incrementing $c_1$, the HMSC will write
a message into the channel $(p_1, p_1')$, using the cMSC with a
single event labeled $p_1!p_1'$.
Accordingly, decrementing $c_1$ corresponds to appending the cMSC
whose only event is of type $p_1'?p_1$, thus removing a message from the channel.
We proceed analogously for counter $c_2$.
In doing so, the HMSC faithfully simulates the two-counter machine:
at any time in an execution, the number of messages in channel $(p_1, p_1')$ corresponds
to the value of $c_1$ in $\mathcal{M}$, and analogously for channel $(p_2, p_2')$
and counter $c_2$.

It remains to simulate zero tests. A zero test for $c_1$ is simply
replaced, in $\HMSC$, by an MSC containing a single \emph{complete} message from $p_1$ to $p_1'$.
Such a transition contributes to
a run of $\HMSC$ iff, previously, the number of $p_1!p_1'$-events
matches the number of $p_1'?p_1$-events, i.e., iff in the simulated
run of the two-counter machine, the value of $c_1$ is 0. The zero test
for $c_2$ is simulated accordingly.

\section{Conclusion and Future Work}
\label{sec:conclusion}

While we showed that satisfiability problems are undecidable
even for restricted versions of HMSCs,
we solved the synthesis/realizability problem for the class of
loop-connected HMSCs, which allows for unbounded-channel behavior.
This realizability result is orthogonal
to the known result for globally cooperative HMSCs, which define
channel-bounded MSC languages. 
An interesting open problem is whether, over finite MSCs,
loop-connected HMSCs are strictly more expressive than
globally cooperative HMSCs:

\begin{problem}
\label{problem:gcHMSC}
Can every globally cooperative HMSC $\HMSC$ be translated into a loop-connected HMSC $\HMSC'$ 
  such that $L(\HMSC) = L(\HMSC')$?
\end{problem}

Apart from that problem, we leave the following questions for future work:
\begin{itemize}
  \item Does undecidability of satisfiability hold for loop-connected and flat HMSCs with \emph{two} 
  processes?

  \item Is there an interesting subclass of (loop-connected or not) HMSCs beyond
  existentially bounded MSCs with decidable satisfiability/model checking problems?

  \item What are the CFMs that correspond to loop-connected HMSCs?
\end{itemize}

%%%%%%%%%%

\bibliographystyle{plainurl}
\bibliography{lit}

\end{document}